\newcommand\bx{{\mathbf x}}
\newcommand\LL{{\mathbb L}}
\newcommand\N{{\mathbb N}}
\newcommand\R{{\mathbb R}}
\newcommand\Z{{\mathbb Z}}
\newtheorem{theo}{Theorem}
\newtheorem{lemma}{Lemma}
\newtheorem{cor}{Corollary}
\newtheorem{assumption}{Assumption}
\title[]{SUPERDIFFUSIVITY OF ASYMMETRIC ENERGY MODEL
IN DIMENSION ONE AND TWO }
\author{C\'edric Bernardin}
\address{\!\!\!\!\!\!\!Universit\'e de Lyon, CNRS (UMPA)\newline
Ecole Normale Sup\'erieure de Lyon,\newline
46, all\'ee d'Italie,\newline
 69364 Lyon Cedex 07 - France.\newline
\rm {\texttt{Cedric.Bernardin@umpa.ens-lyon.fr}}\newline
\texttt{http://w3umpa.ens-lyon.fr/\~\;\!\!cbernard/}
}
\date{\today}
\thanks{\textsc{Acknowledgements. The author thanks F. Redig for valuable discussions and J. Fritz for useful comments on the existence problems of the dynamics. The author acknowledge the support of the French Ministry of Education through the ANR BLAN07-2184264 grant.} }
\keywords{Interacting particles system, super-diffusivity, KPZ class.}
\begin{document}


\maketitle

\begin{abstract}
We discuss an asymmetric energy model (AEM) introduced
by Giardina et al. in \cite{7}. This model is expected to belong to the KPZ
class. We obtain lower bounds for the diffusion coefficient. In particular,
the diffusion coefficient is diverging in dimension one and two as it is
expected in the KPZ picture.
\end{abstract}

\section{Introduction}
In their well-known paper (\cite{8}), M. Kardar, G. Parisi and Y.-C. Zhang
introduce a model for the evolution of the profile of a growing interface
\begin{equation}
\label{eq:1}
\partial_t h = \nabla^2 h +\cfrac{1}{2} (\nabla h)^2 +\eta (x,t)
\end{equation}

Here $h(x, t)\in \R$ is the height of the interface at location $x \in \R^d$ and at time
$t > 0$ and $\eta (x, t)$ is a space-time white noise. Starting from a flat state at
time $t = 0$, they are interested in the evolution of the fluctuations 
$$\xi(x, t) =< [h(x, t) -< h(x, t) >]^2 >^{1/2}$$
The intuitive picture is that the width grows with time as a power law up
to a saturation time that scales with the substrate size $L$ as $L^z$ where $z$ is
the scaling exponent. In other words, we expect that
$$\xi(L,t)= L^{2-z} f(t/L^z)$$
where the scaling function $f(x)$ saturates at large $x$ and $f(x) \sim x^{(2-z)/z}$
for $x\sim 0$ ([10]). By dynamic renormalization-group techniques, Kardar et
al. show that in dimension 1, the dynamic scaling exponent is $z = 3/2$
and $z = 2$ in dimension $d\geq 3$. The dimension $d = 2$ is the critical one
and their numerical studies indicate $1/z = 0.62 ± 0.04$. As noticed by
Kardar et al., equation (\ref{eq:1}) can be mapped to the Burgers equation
for a vorticity-free velocity field
\begin{equation}
\label{eq:3}
\partial_t v + v \cdot \nabla v = \nabla^{2} v -\nabla \eta (x,t)
\end{equation}
with $v = -\nabla h$ and $\eta (x, t)$ a space-time white noise.

Burgers equation is also closely connected to driven diffusive systems.
Consider a diffusive system under constant uniform driving force described
by a nonlinear Langevin equation. Then a quadratic order expansion of the
density gives the Burger's equation (\ref{eq:3}). In \cite{2}, H. van Beijern et al. investigate
the steady-state scattering function for driven diffusive systems with a
single conserved density. Mode-coupling arguments predict that in dimension
$d = 1$ (resp. $d = 2$), density fluctuations spread as $t^{2/3}$ (resp. $t (log t)^{2/3}$)
whereas they are of order $t^{1/2}$ (ordinary diffusion law) in dimension $d\geq 3$.
Guided by ideas of universality, we expect that a large class of microscopic
models whose evolutions are in a suitable coarse time and length scale an
approximation of the KPZ equation (\ref{eq:1}) (or the noisy Burgers equation) have a universal scaling exponent $z$ and a universal scaling limit.

Asymmetric Simple Exclusion Process (ASEP) is a natural discretization
(see \cite{16}) of the stochastic Burger's equation. The dynamics are given
by asymmetric random walks on $\Z^d$ with a drift in some direction such that jumps of
particles to occupied sites are forbidden (exclusion rule). It can be reinterpreted
as a growth model which is a natural discretization of the KPZ
equation. During the last decade, a lot of work has been accomplished to
test the validity of universality predictions for ASEP. In the one dimensional
case the value of the dynamical exponent $z = 3/2$ has been confirmed. Not only the exponent
but also the scaling function was obtained (\cite{6}). Moreover the limit is the same as the largest eigenvalue distribution in the random matrix theory (see \cite{4} and references therein for more informations). Nevertheless an important fact has to be mentioned: the methods used to obtain dynamical scaling exponent and limit distribution are very dependent of the
specific properties of ASEP. Indeed the main results are valid and proved
in the one dimensional case and for the Totally Asymmetric Simple Exclusion
Process (TASEP), which corresponds to nearest neighbors jumps in the
right direction. These results can not be carried for general ASEP. For example
M. Pr\"ahofer and H. Spohn (\cite{14}) compute the current fluctuations
for the TASEP but their proof does not work for other ASEP models. A more robust method has been introduced by  M. Bal\'azs and T. Sepp\"al\"ainen (\cite{1}) but is restricted to attractive systems. Even if they do not obtain the scaling limit function, they are able to establish the order of current for the nearest neighbors ASEP but not for general ASEP (obtained in \cite{16} by generalized duality techniques). Recently this method has been developed in the context of the Asymmetric Zero Range Process (AZRP) (see \cite{BK}).

To my knowledge, the only models belonging to the KPZ class and for
which a rigorous proof of the scaling order has been obtained are ASEP, AZRP,
Poly Nuclear Growth model (PNG) and related models (\cite{1}, \cite{BK}, \cite{5}, \cite{15}). In
dimension 2, the class is even more restrictive and the only rigorous result is
the scaling order obtained by H.T. Yau for general ASEP (\cite{19}). Hence, the
class of microscopic models for which one can rigorously prove they belong
to the KPZ class is very small and it is hence of extreme importance to have simple models for which one can rigorously prove they are in the KPZ class.

The aim of this paper is to study a \textit{non-attractive} model introduced by Giardina et al.
in \cite{7} and to show it presents anomalous behavior in low dimension as it is
expected in the KPZ picture.
In \cite{7}, Giardina et al. consider \textit{Symmetric Energy Model} (heat conduction
model in their terminology) and show the system has a dual process.

They also introduce an asymmetric generalization of the model that we call
the Asymmetric Energy Model (AEM). AEM should belong to the KPZ
universality class. It presents several analogies with ASEP but also differences
(\cite{7}). In this paper, we develop generalized duality properties for AEM
and obtain lower bounds for the bulk diffusion coefficient $D(t)$, i.e. the
variance of the two points correlation function (see (\ref{eq:12})). KPZ approach
predicts large time behavior of $D(t)$. In particular, $D(t)$ is expected to be divergent in dimension $1$ and $2$ and finite in dimension $d\geq 3$. The goal of this article is to obtain lower bounds for $D(t)$ consistent with this (theorem 1). The proof of this result is based on generalized
duality techniques introduced by C. Landim and H.T. Yau in the context
of ASEP. They have been developed in several directions but essentially for
lattice gas dynamics. Our main sources of inspiration are given by \cite{3} and \cite{12}. 

The paper is organized as follows. In section 2, we define AEM and introduce
the diffusion coefficient $D(t)$. Section 3 is devoted to the generalized
duality properties of the process. Section 4 contains the technical lemmas
necessary for the proof of theorem 1. The proofs of the main theorem are
given in section 5,6,7 for the $1$, $2$ and $d\geq 3$ case. The paper is ended by
remarks in section 8.

\section{The Asymmetric Energy Model (AEM)}

The system is composed of atoms indexed by $x \in \Z^d$. The canonical basis of ${\mathbb R}^d$ is denoted by $(e_1,\ldots,e_d)$. Each atom has a momentum $p_x \in \R$. Momenta are exchanged during the stochastic evolution in such a way that kinetic energy is conserved. The generator $L=S+A$ of AEM is defined by 
\begin{equation*}
(Sf)(p) =\sum_{i=1}^d \sum_{x \in \Z^d} (p_{x+e_i} \partial_{p_x} -p_x \partial_{p_{x+e_i}})^2
\end{equation*}
and
\begin{equation*}
(Af)(p)=\sum_{i=1}^d a_i \sum_{x \in \Z^d} p_x p_{x+e_i} (p_{x+e_i} \partial_{p_x} -p_x \partial_{p_{x+e_i}})
\end{equation*}
Here $p=(p_x)_{x \in \Z^d}$ is an element of the state space $\Omega=\R^{\Z^d}$ and $f$ is a smooth local function of $p$. Parameters $(a_i)_{i=1, \ldots,d}$ regulate the strength of the asymmetry in each direction. Let
\begin{equation*}
\mu_T (dp) = \otimes_{x \in \Z^d} (2\pi T)^{-1/2} \exp(-p_x^2 /2T) dp_x
\end{equation*}
be  the Gaussian product measure with temperature $T$. 
$\mu_T$ is an invariant probability measure for $L$. Moreover $S$ is symmetric and $A$ is antisymmetric in ${\mathbb L}^2 (\Omega, \mu_T)$. We fix now $T>0$ and denote $\mu_T$ by $<\cdot>$.
Energy of site $x$ is denoted by $E_x = p_x^2$. The formal total energy $\sum_{x \in \Z^d} E_x$ is a conserved quantity of the dynamics and one has
\begin{equation*}
L(E_x)= 2 \Delta(E_x) + 2\sum_{i=1}^d a_i \nabla_{e_i} (E_{x - e_i} E_x)
\end{equation*} 
where $\Delta$ is the usual $d$-dimensional discrete Laplacian:
\begin{equation*}
\Delta E_x = \sum_{i=1}^d \left\{ E_{x+e_i} +E_{x-e_i} -2E_x\right\}
\end{equation*}
and $\nabla_{e_i}$ is the discrete gradient in the direction $e_i$:
\begin{equation*}
\nabla_{e_i} E_x E_{x-e_i} = E_{x+e_i} E_x -E_x E_{x-e_i}
\end{equation*}
The microscopic instantaneous current in the direction $e_i$ is given by
\begin{equation*}
j_{x,x+e_i} (p) = 2 \nabla_{e_i} (E_{x}) +2 a_i E_{x}E_{x+e_i}
\end{equation*}
and one has the following microscopic continuity equation
\begin{equation*}
E_x (t) -E_x (0) = \sum_{i=1}^d \int_{0}^t \left(\nabla_{e_i} j_{x-e_i,x}\right)(p(s)) ds +M_x (t)
\end{equation*}
where $M_x (t)$ is a martingale.

We are interested in the energy-energy correlation function $S(x,t)$ defined by 
\begin{equation*}
S(x,t) = <E_x (t) E_0 (0)> - T^2
\end{equation*}

At equilibrium, the mean value of the current in the direction $e_i$ is $j_{i} (T)= \mu_T (j_{x,x+e_i})=2 a_i T^2$. By the conservation law (cf. \cite{18}, pp. 263-264), one has the two following informations for the average location and velocity of the structure function:
\begin{equation*}
\sum_x S(x,t)=2T^2
\end{equation*}
and
\begin{equation*}
\cfrac{1}{2T^2} \sum_x x S(x,t) =t v
\end{equation*}
where $v=\sum_{i=1}^d {j'}_i (T) e_i = 4T \sum_{i=1}^d a_i e_i$.

The third natural quantity to study is the bulk diffusion coefficient which is defined by
\begin{equation}
\label{eq:12}
D_{i,j} (t) =\cfrac{1}{4T^2 t} \left\{ \sum_{x \in \Z^d} x_i x_j S(x,t) -2T^2 (v_i t)(v_j t)\right\}
\end{equation}

Based on mode coupling theory (\cite{2}), it is expected that
\begin{equation*}
D(t) \sim 
\begin{cases}
t^{1/3}, \quad d=1\\
(\log t)^{2/3}, \quad d=2\\
1, \quad d \geq 3
\end{cases}
\end{equation*}
for large $t$.
Let $w_i (p)$ be the normalized current in the direction $e_i$
\begin{equation*}
w_i (p)= j_{0,e_i} (p) -j_i (T) - {j'}_i (T) (p_0^2 -T)
\end{equation*} 
or more explicitly
\begin{equation*}
w_i (p)= 2(1+a_i T) \nabla_{e_i} (E_0 ) +2a_i (E_0 -T) (E_{e_i} -T)
\end{equation*}
For local functions $f$ and $g$ in ${\mathbb L}^2 (\Omega)$, we define the semi-inner product
\begin{equation*}
\ll f, g \gg =\sum_{z \in \Z^d} (<\tau_z f g>-<f><g>)
\end{equation*}
Here $\tau_z$ is the usual shift on $\Omega$. The semi-norm corresponding to $\ll\cdot,\cdot\gg$ is denoted by $\| \cdot \|$. Note that (discrete) gradient terms $g=\tau_x h -h$ and constants vanish in this norm.

A formal integration by parts gives the following formula for the diffusion coefficient (see \cite{11})
\begin{equation}
\label{eq:diff}
D_{i,j}(t)= \cfrac{\delta_{i,j}}{2} + \cfrac{1}{4T^2} \ll {t^{-1/2} \int_{0}^t w_i (p(s))ds\, , \, t^{-1/2} \int_{0}^t w_j (p(s)) ds}\gg
\end{equation}

The Laplace transform of the diffusion coefficient is then given by
\begin{equation}
\label{eq:17}
\int_{0}^{\infty} e^{-\lambda t} t D_{i,j} (t) dt = \cfrac{1}{2\lambda^2} + \cfrac{1}{4 T^2 \lambda^2} \ll w_i, (\lambda -L)^{-1} w_j \gg
\end{equation}

We have to mention that  we don't have a rigorous proof of equality (\ref{eq:diff}). In fact even the existence of the equilibrium infinite volume dynamics is a non trivial problem (\cite{F1}). Several methods exist in the literature (\cite{F2}, \cite{OT} and references therein) but they are not directly applicable and we plan to extend these methods for AEM in a future work. In the rest of the paper, we assume the following: 

\begin{assumption}
The operator $L$ defined on the set of local integrable smooth functions of $\Omega$ is closable and its closure also denoted by $L$ is the generator of a strong Markov process. Moreover the set of local smooth integrable functions on $\Omega$ is a core for $L$. 
\end{assumption}

In the following RHS of (\ref{eq:diff}) will be used as \textit{definition} of $D(t)$. For ASEP, the validity of (\ref{eq:diff}) can be established by coupling techniques one can note translate in our context (see \cite{11} for more informations on this subject).
It follows that in terms of Laplace transform, behavior of $D(t)$ for large $t$ is, in a Tauberian sense, equivalent to behavior for small $\lambda$ of
\begin{equation*}
\ll w_i, (\lambda -L)^{-1} w_i \gg
\end{equation*}
For simplicity, we will restrict ourselves to the case $i=j$. We will prove the following theorem

\begin{theo}
\label{th:1}
There exists a constant $C>0$ such that
\begin{equation*}
\ll w_i, (\lambda -L)^{-1} w_i \gg \ge 
\begin{cases}
C\lambda^{-1/4}, \quad d=1\\
C |\log \lambda|^{1/2}, \quad d=2\\
C, \quad d\ge 3
\end{cases}
\end{equation*}
Moreover, if $d\ge 3$, we have also an upper bound
\begin{equation*}
\ll w_i , (\lambda -L)^{-1} w_i \gg \leq C^{-1}
\end{equation*}
\end{theo}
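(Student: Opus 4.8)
The plan is to obtain the lower bound via a variational characterization of the resolvent inner product, and the upper bound (in $d\ge 3$) via a complementary variational formula. Recall that since $S$ is symmetric nonnegative and $A$ is antisymmetric in $\LL^2(\mu_T)$, one has the standard sector-condition-type bounds giving
\begin{equation*}
\ll w_i, (\lambda-L)^{-1} w_i \gg = \sup_{f} \left\{ 2\ll w_i, f\gg - \ll f,(\lambda - S) f\gg - \ll Af, (\lambda - S)^{-1} Af \gg \right\},
\end{equation*}
the supremum running over local smooth functions. Dropping the last (nonnegative) term immediately yields
\begin{equation*}
\ll w_i, (\lambda-L)^{-1} w_i \gg \ge \sup_f \left\{ 2\ll w_i, f\gg - \ll f,(\lambda-S)f\gg \right\} = \ll w_i,(\lambda - S)^{-1} w_i\gg,
\end{equation*}
but this symmetric resolvent is bounded uniformly in $\lambda$ (the current $w_i$ has a nonzero projection onto... no — in fact one must be more careful, as $w_i$ itself is a gradient-like plus quadratic term). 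So the genuine mechanism must be: insert a well-chosen trial function $f$ that \emph{uses the antisymmetric part} to generate the divergence, exactly as in the Landim--Yau / Landim--Quastel--Salmhofer--Yau scheme for ASEP.

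The key steps, in order, would be: (1) Set up the duality/Fock-space representation from Section 3, writing $\LL^2(\mu_T)$ (or the relevant subspace of functions orthogonal to constants and energy) in terms of the dual particle system obtained in the generalized duality of the excerpt; under this correspondence $S$ acts as (a constant multiple of) the symmetric exclusion-type Laplacian and $A$ shifts particle number by one, mapping the $n$-particle sector to the $(n\pm1)$-particle sector. (2) Identify $w_i$: its gradient part lives in the one-particle-like sector and its quadratic part $2a_i(E_0-T)(E_{e_i}-T)$ in the two-particle-like sector. (3) Restrict the variational problem to trial functions supported on the lowest few sectors and carry out the standard ``second-order'' perturbation: estimate $\ll w_i, (\lambda - S + A(\lambda-S)^{-1}A)^{-1} w_i\gg$ from below by replacing the operator in the two-particle sector by its value after one more resolvent expansion, reducing everything to a concrete computation of $\int \frac{\hat\Phi(k)}{\lambda + \hat\theta(k)}\,dk$ type integrals over the torus, where $\hat\theta(k)\sim|k|^2$ is the symbol of the discrete Laplacian. (4) Evaluate the asymptotics of that integral: in $d=1$ the extra factor coming from the two-particle resolvent produces $\lambda^{-1/4}$; in $d=2$ it produces $|\log\lambda|^{1/2}$; in $d\ge3$ it stays bounded, and a matching upper bound follows because the full operator $\lambda - S + A(\lambda-S)^{-1}A$ is bounded below by $S$ restricted appropriately, whose resolvent applied to $w_i$ converges as $\lambda\to0$.

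I would expect the main obstacle to be step (3): controlling the off-diagonal structure of $A$ between sectors so that the truncation to low sectors is justified, i.e. proving that the contributions of the higher Fock sectors to the variational supremum are genuinely lower-order and do not destroy the $\lambda^{-1/4}$ or $|\log\lambda|^{1/2}$ rate. Concretely this requires a ``graded sector condition'' bounding $\ll Af, g\gg$ in terms of the Dirichlet forms $\ll f,(-S)f\gg$ and $\ll g,(-S)g\gg$ with a constant growing at most polynomially (indeed, linearly) in the number of particles, together with the observation that $w_i$ lives in low sectors so only a bounded number of applications of $A$ matter at the order we track. The quadratic (non-attractive, non-gradient) nature of $w_i$ means its two-particle component is not a pure gradient and this is precisely what feeds the divergence; the bookkeeping of the Fourier symbols of both the gradient and quadratic parts — and checking that they do not accidentally cancel against each other in the relevant sector — is the delicate computational heart of the argument. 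The $d\ge3$ upper bound, by contrast, should be comparatively routine: bound the antisymmetric contribution using the graded sector condition and conclude that $\ll w_i,(\lambda-L)^{-1}w_i\gg \le C\,\ll w_i,(\varepsilon - S)^{-1}w_i\gg$ for a fixed small $\varepsilon$, which is finite and $\lambda$-independent.
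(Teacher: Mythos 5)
Your overall strategy is indeed the paper's: the variational formula (\ref{eq:55}), the dual representation of Section \ref{sec:3} under which $\mathcal S$ is comparable to free random walks (Lemmas \ref{lem:1}--\ref{lem:2}), restriction of the supremum to the lowest nontrivial sector, and a Fourier computation. But two things in your write-up are genuinely wrong or missing. First, your opening inequality goes the wrong way: dropping the nonpositive term $-\ll Af,(\lambda-S)^{-1}Af\gg$ from the supremand \emph{increases} the supremum, so it yields $\ll w_i,(\lambda-L)^{-1}w_i\gg\le\ll w_i,(\lambda-S)^{-1}w_i\gg$, not $\ge$. This upper bound is in fact precisely how the paper handles $d\ge3$, with no sector condition at all: in the dual basis $w_i$ is $2a_i\delta_{\xi_i}$ plus a gradient (which vanishes in the $\ll\cdot,\cdot\gg$ seminorm, so there is no gradient/quadratic cancellation to worry about), it lies in a single sector preserved by $\mathcal S$, and the resulting two-free-particle resolvent reduces to the Green function of the discrete Laplacian, finite exactly when $d\ge3$. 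Your parenthetical claim that this symmetric resolvent is bounded uniformly in $\lambda$ is also false in $d=1,2$, where it diverges like $\lambda^{-1/2}$ and $|\log\lambda|$; the only reason it is useless for the lower bound is the direction of the inequality.

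Second, and more seriously, the quantitative content of the theorem is left entirely as expectation. Restricting the supremum to $F\in\mathcal H_{0,4}$ is automatically legitimate (a supremum over a smaller set is smaller), so no graded sector condition with polynomially growing constants is needed --- and indeed the paper's comparison constants are \emph{exponential} in the particle number, which is harmless only because the sector is fixed; your emphasis on polynomial growth in $n$ is misplaced for this theorem. What \emph{is} needed, and what your proposal does not supply, is: (i) the control of $\mathcal A^0$ by the Dirichlet form (Corollary \ref{cor:1}); (ii) an explicit test function --- in $d=1$ the paper takes $F^*(x)=\lambda^{-1/4}e^{-\lambda^{3/4}|x|}$; and above all (iii) the crux estimate $\|\mathcal A^+F\|_{-1,\lambda}^2=O(\lambda^{-1/4})$ of Lemma \ref{lem:4}, which requires computing $(m\mathcal A^+F)^*$ term by term and a careful Fourier analysis of the resulting higher-sector resolvent. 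Only then can one balance the gain $2\ll\delta_{\xi_i},F\gg\sim a\lambda^{-1/4}$ against the costs $Ca^2\lambda^{1/4}+Ca^2\lambda^{-1/4}$ and choose $a$ small to conclude. (A minor correction: $\mathcal A^{\pm}$ shifts the dual degree by $2$, not $1$.) As it stands your text correctly identifies the scheme of \cite{3} and \cite{12} but does not constitute a proof of the stated exponents.
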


In a Tauberian sense, this theorem means that $D(t) \geq Ct^{1/4}$ for $d=1$, $D_{i,i} (t) \geq (\log t)^{1/2}$ for $d=2$, $C^{-1} \geq D_{i,i} (t) \geq C >0$ for $d\geq 3$.

Observe that the assumptions are not so relevant when we are only interested in lower and upper bounds. Without any assumption one can obtain similar bounds if we define the diffusion coefficient by a finite volume limit procedure. It means we can define $D(t)$ by the following limit (when it exists):
\begin{equation*}
D_{i,j} (t) =\lim_{N \to \infty} D_{i,j}^N (t), \quad D_{i,j} ^{(N)} (t)= \cfrac{1}{4T^2 t} \left\{ \sum_{x \in {\mathbb T}_N^d} x_i x_j S(x,t) -2T^2 (v_i t)(v_j t)\right\}
\end{equation*} 
where ${\mathbb T}_N^d$ is the $d$-dimensional discrete torus of length $N$ and the dynamics is now defined on ${\mathbb R}^{{\mathbb T}_N^d}$. Then one can prove the following lower bounds
\begin{equation*}
\lambda^{2} \int_{0}^{\infty} e^{-\lambda t} t D^{(N)}_{i,j} (t) dt \geq
\begin{cases}
C\lambda^{-1/4}, \quad d=1\\
C |\log \lambda|^{1/2}, \quad d=2\\
C, \quad d\ge 3
\end{cases}
\end{equation*}
where $C>0$ is independent of $\lambda$ and $N$. Moreover we have the corresponding upper bound if $d \geq 3$.   The advantage to deal directly with the infinite volume definition of the  diffusion coefficient is that it simplifies notations and avoid to work with discrete Fourier transform but directly with continuous Fourier transform.  

\section{Duality}
\label{sec:3}

For simplicity, we fix the temperature $T$ equal to one and we denote by $\mu$ or $<\cdot>$ the standard Gaussian product measure $\mu_1$. The Hilbert space associated to $\mu$ is denoted by ${\LL}^2 (\Omega)$ and the corresponding inner product is denoted by $<\cdot, \cdot>$. Let $h_n, n\ge 0$ be the sequence of Hilbert polynomials which is an orthogonal basis in ${\mathbb L}^2 ({\mathbb R}, (2\pi)^{-1/2} e^{-x^2 /2} dx)$:
\begin{equation*}
\int_{\mathbb R} h_n(x) h_m (x) \cfrac{e^{-x^2 /2}}{\sqrt{2\pi}}= \delta_{m,n} n!, \quad n,m \ge 0
\end{equation*}

The dual space of AEM is ${\mathbb N}^{{\mathbb Z}^d}$ and elements $\xi$ of ${\N}^{\Z^d}$ are seen as configurations of the "generalized dual process". If $\xi=(\xi_x\, ; \, x \in \Z^d )$ is a configuration of the dual space, we will say that $\xi$ is local if $\xi_x \neq 0$ only for a finite number of sites $x \in \Z^d$. In such a case the number of particles $|\xi|$ is defined by
\begin{equation*}
|\xi |=\sum_{x \in \Z^d} \xi_x
\end{equation*}
For any local configuration $\xi: \Z^d \to \N$, we define the multivariate Hermite polynomial function $H_{\xi}$ by
\begin{equation*}
H_{\xi} (p) =\prod_{x \in \Z^d} \cfrac{h_{\xi_x} (p_x)}{n(\xi_x)}
\end{equation*}
where $n:\N \to \N \backslash \{0\}$ is a suitable normalization function we will precise later. Remark that
\begin{equation*}
<H_\xi, H_\eta> = \cfrac{\xi !}{n^{2}(\xi)} \delta_{\xi,\eta}
\end{equation*} 
Here, the notations $\xi !$ and $n(\xi)$ are for $\prod_{x} (\xi_x !)$ and $\prod_x n(\xi_x)$.

Any local function $f \in \LL^2 (\Omega)$ can be decomposed uniquely as a finite linear combination of local functions
\begin{equation*}
f=\sum_\xi F(\xi) H_\xi = \sum_{n \ge 0} \sum_{|\xi|=n} F(\xi)H_\xi
\end{equation*}  
The coefficients of this linear combination are given by a real valued function $F$ defined on the dual space $\N^{\Z^d}$. Such a function is said to be of degree $n \ge 0$ if $F(\xi)=0$ as soon as $|\xi| \neq n$. For example, the normalized current $w_i (p)$ has the following decomposition
\begin{equation}
\label{eq:25}
w_i = 2 a_i \sum_\xi \delta_{\xi_i} (\xi) H_{\xi}\; + \; {\text{gradient term}}
\end{equation}  
where $\xi_i$ is the configuration with two particles on site $0$ and two particles on site $e_i$:
\begin{equation*}
(\xi_i )_x= 
\begin{cases}
2 \text{ if } x=0\\
2 \text{ if } x=e_i\\
0 \text{ otherwise}
\end{cases}
\end{equation*}
We know examine how acts the generator on the dual space. For this purpose we need to introduce notations. We introduce a cemetery  configuration $\circ$ which does not belong to the dual space. Any function $F : {\mathbb N}^{{\mathbb Z}^d} \to \mathbb R$ is extended to a function $F$ on ${\mathbb N}^{{\mathbb Z}^d} \cup \{\circ\}$ by $F(\circ)=0$. If $\xi$ is a configuration belonging to the dual space $\N^{\Z^d}$ and $x,y$ are two sites of $\Z^d$ then $\xi^{x,+2,y,-2}$ is the configuration obtained from $\xi$ by moving two particles from site $y$ to site $x$. If $\xi_y \le 2$ then $\xi^{x,+2,y,-2} = \circ$. The configuration $\xi^{x,+2}$ (resp. $\xi^{x,-2}$) is the configuration obtained from $\xi$ by adding (resp. by removing) two particles on site $x$. In the second case, if $\xi_x \le 2$ then $\xi^{x,-2} = \circ$. If $f \in \LL^2 (\Omega)$ is a smooth local function such that 
\begin{equation*}
f=\sum_{\xi \in {\mathbb N}^{{\mathbb Z }^d}} F(\xi) H_{\xi} = \sum_{n \ge 0} \sum_{|\xi|=n} F(\xi) H_\xi 
\end{equation*}
then 
\begin{equation*}
Lf= \sum_{i=1}^d \left\{ S_i f +a_i A_i f\right\}=\sum_{i=1}^d \left\{\sum_\xi ({\mathcal S}_i F)(\xi) H_\xi + a_i \sum_\xi ({\mathcal A}_i F)(\xi)H_\xi\right\}
\end{equation*}
We take
\begin{equation*}
n(k)=k!!=k(k-2)(k-4)\ldots, \quad n(0)=1
\end{equation*}
A long but elementary computation (see appendix) shows that
\begin{eqnarray*}
({\mathcal S}_i F)(\xi) &=& \sum_x (\xi_x +1) \xi_{x-e_i} \left[ F(\xi^{x,+2,x-e_i,-2})- F(\xi)\right]\\
&+& \sum_x (\xi_x +1) \xi_{x+e_i} \left[ F(\xi^{x,+2,x+e_i,-2})- F(\xi)\right]
\end{eqnarray*}
and
\begin{equation*}
{\mathcal A}_i = {\mathcal A}_i^0 + {\mathcal A}_i^+  +{\mathcal A}_i^-
\end{equation*}
with
\begin{equation*}
({\mathcal A}_i^0 F)(\xi) =\sum_x \left[ (\xi_{x+e_i} +1)\xi_x F(\xi^{x,-2,x+e_i, +2}) - (\xi_{x-e_i} -1) \xi_x F(\xi^{x,-2,x-e_i, +2})\right]
\end{equation*}
and
\begin{equation*}
({\mathcal A}_i^+ F)(\xi)= \sum_x \xi_x (\xi_{x+e_i} -\xi_{x-e_i}) F(\xi^{x,-2})
\end{equation*}
and
\begin{equation*}
({\mathcal A}_i^- F)(\xi) = \sum_x (\xi_x +1) (\xi_{x+e_i} -\xi_{x-e_i}) F (\xi^{x,+2})
\end{equation*}
Remark that ${\mathcal S}= \sum_{i=1}^d {\mathcal S}_i$ is the generator of a Markov process reversible       w.r.t. the measure
\begin{equation*}
m(\xi)= \cfrac{\xi !}{n^2 (\xi)}
\end{equation*}

This Markov process is not irreducible but if we restrict this process to the invariant subspace
\begin{equation*}
{\mathcal E}_{\varepsilon,n} = \left\{ \xi; \quad |\xi|=n,\; \xi_x =\varepsilon_x \text{ mod } 2\right\}, \quad \varepsilon \in \{0,1\}^{\Z^d}
\end{equation*}
then the restriction is irreducible. Let ${\mathcal H}_{\varepsilon, n}$ be the subspace of functions $F$ vanishing outside ${\mathcal E}_{\varepsilon, n}$.

Remark that the process corresponding to the generator $\mathcal S$ is the same as the dual process derived in \cite{7}. Nevertheless it is important to observe that our basis $\{H_\xi\}$ is different from the basis of \cite{7}. The key advantage of our choice is that the basis $\{H_\xi\}$ is orthogonal and the computations simplify considerably.

The operators ${\mathcal A}_i^0$, ${\mathcal A}_i^+$ and ${\mathcal A}_i^-$ are not Markov generators. The operator ${\mathcal A}_i^0$ conserves the degree and is antisymmetric in $\LL^2 (m)$. The operator ${\mathcal A}_i^+$ increases the degree by $2$ and ${\mathcal A}_i^-$ decreases the degree by $2$. In $\LL^2 (m)$, one has $({\mathcal A}_i^-)^*=-{\mathcal A}_i^+$.

We will use the following abusive but very convenient notations in the sequel. If $f,g \in {\LL}^2 (\Omega)$ are smooth local functions with coefficients in the basis $H_\xi$ given by local functions $F$ and $G$ then we have:
\begin{equation*}
<f,g>=<F,G>
\end{equation*}
with $<F,G>$ defined by
\begin{equation*}
<F,G>=\sum_{\xi \in \N^{\Z^d}} F(\xi) G(\xi) m(\xi)
\end{equation*}
We write similarly
\begin{equation*}
\ll f, g \gg = \ll F, G \gg
\end{equation*}
with 
\begin{equation}
\label{eq:40}
\ll F, G \gg = \sum_{x \in \Z^d} \sum_{\xi \in \N^{\Z^d} \backslash \{ 0 \} } F(\xi)G(\tau_x \xi) m(\xi)
\end{equation}
where $\tau_x \xi$ is the shifted configuration by $x$, meaning $(\tau_x \xi)_z=\xi_{x+z}$. Remark that in (\ref{eq:40}), the sum is carried over configurations $\xi$ with at least one particle. 

\section{Free particles Approximation}
\label{sec:4}

We introduce the generator $\Delta$ (discrete Laplacian) of independent random walks on ${\mathbb Z}^d$. It is given by:
\begin{equation*}
(\Delta G)(\eta) = \sum_{\substack{x,y \in \Z^d\\ |x-y|=1}} \eta_x \left[ G(\eta^{x,-1,y,+1}) -G(\eta)\right]
\end{equation*}
Here $\eta \in {\mathbb N}^{\Z^d}$ and $\eta^{x,-1,y,+1}$ is the configuration obtained from $\eta$ by moving a particle from site $x$ to site $y$. $G$ is a real valued local function defined on $\N^{\Z^d}$. A configuration $\eta$ can be seen as an element of $(\Z^d)/\Sigma_n$ where $\Sigma_n$ is the symmetric group of order $n$. The identification is given by the map
\begin{equation*}
(x_1, \ldots, x_n) \in (\Z^d)^n/\Sigma_n \to \eta
\end{equation*}
with 
\begin{equation*}
\eta (x)= \sum_{i=1}^n \delta_{x_i} (x)
\end{equation*}

A function $G$ in the domain of $\Delta$ is then identified with an element of ${\oplus}_{n=1}^{\infty} {\LL^2} ((\Z^d)^n/\Sigma_n)$ (with respect to the counting measure).
We denote the standard inner product on ${\oplus}_{n=1}^{\infty} {\LL^2} ((\Z^d)^n/\Sigma_n)$ by $<\cdot, \cdot>_0$:
\begin{equation}
\label{eq:48}
\begin{array}{lcl}
\forall F,G \in {{\mathbb L}^2} ((\Z^d)^n/\Sigma_n),  \quad <F,G>_0 &=& \sum_{\bx =(x_1,\ldots,x_n) \in (\Z^d)^n/\Sigma_n} F(\bx)G(\bx)\\
&=& \cfrac{1}{n!} \sum_{\bx= (x_1,\ldots,x_n) \in (\Z^d)^n} F(\bx)G(\bx)
\end{array}
\end{equation}

We define also the scalar product with translations associated to $<\cdot,\cdot>_0$
\begin{equation*}
\ll F, G \gg_0 = \sum_{x \in \Z^d} \left( <F , \tau_x G>_0 -<F>_0<G>_0\right)
\end{equation*}

We recall that $\Delta$ is a positive self-adjoint operator w.r.t. $<\cdot,\cdot>_0$.

\begin{lemma}
\label{lem:1}
Let $F(\xi)$ be a local function belonging to ${\mathcal H}_{0,2n}$ and define $G(\eta)=F(2\eta)$ then
\begin{equation*}
C^{-1} 2^{-2n} <-\Delta G, G>_0 \le <-{\mathcal S}F, F > \le C n <-\Delta G, G>_0
\end{equation*}
where $C$ is a positive constant independent of $n$ and $F$.
\end{lemma}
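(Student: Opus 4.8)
The plan is to read both sides as Dirichlet forms of reversible jump processes living on one and the same configuration graph, and then compare "conductances" move by move.

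\emph{Setting up the identification.} On $\mathcal E_{0,2n}$ every coordinate of $\xi$ is even, so $\xi\mapsto\eta=\xi/2$ is a bijection onto $\{\eta\in\N^{\Z^d}:|\eta|=n\}$, and under it a transition $\xi\to\xi^{x,+2,x-e_i,-2}$ of $\mathcal S$ becomes the single–particle hop $\eta\to\eta^{x,+1,x-e_i,-1}$ that $\Delta$ also performs. Since $\mathcal S$ is reversible for $m$ and $\Delta$ is self-adjoint for $\langle\cdot,\cdot\rangle_0$ (whose reference weight on $\{|\eta|=n\}$ is $1/\eta!$), the same bookkeeping — harmless on $\mathcal E_{0,2n}$ because no half-integer or cemetery corrections occur there — writes both forms over exactly the same set of oriented moves:
\begin{equation*}
\langle-\mathcal S F,F\rangle=\sum_{i,x}\ \sum_{\eta:\eta_{x-e_i}\ge1} m(2\eta)\,(2\eta_x+1)(2\eta_{x-e_i})\,\bigl(G(\eta^{x,+1,x-e_i,-1})-G(\eta)\bigr)^2 ,
\end{equation*}
\begin{equation*}
\langle-\Delta G,G\rangle_0=\sum_{i,x}\ \sum_{\eta:\eta_{x-e_i}\ge1}\frac{\eta_{x-e_i}}{\eta!}\,\bigl(G(\eta^{x,+1,x-e_i,-1})-G(\eta)\bigr)^2 ,
\end{equation*}
with $m(2\eta)=2^{-2n}\prod_z\binom{2\eta_z}{\eta_z}$ (using $\sum_z\eta_z=n$). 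The lemma is now a comparison of the two families of coefficients, the increments $(G(\cdot)-G(\cdot))^2$ being common to both.

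\emph{Lower bound.} I would argue move by move. For the hop with source $a$ and target $b$ ($\eta_a\ge1$, $|a-b|=1$) the ratio of coefficients is
\begin{equation*}
\frac{m(2\eta)(2\eta_b+1)(2\eta_a)}{\eta_a/\eta!}=2(2\eta_b+1)\,m(2\eta)\,\eta!=2(2\eta_b+1)\,2^{-n}\prod_z(2\eta_z-1)!!\ \ge\ 2^{-n+1}\ \ge\ 2^{-2n},
\end{equation*}
since $2\eta_b+1\ge1$ and every double factorial is $\ge1$. Summing over moves gives $\langle-\mathcal S F,F\rangle\ge2^{-2n}\langle-\Delta G,G\rangle_0$, even with $C=1$.

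\emph{Upper bound — the main obstacle.} The same ratio equals $2(2\eta_b+1)\,2^{-n}\prod_z(2\eta_z-1)!!$. The factor $2\eta_b+1\le2n+1$ is exactly the source of the announced $Cn$, but the factor $2^{-n}\prod_z(2\eta_z-1)!!$ is comparable to $1$ only as long as the occupation numbers $\eta_z$ stay bounded; on configurations with many particles stacked on a few sites it is large, so a termwise bound does not close. The genuine step is therefore not an edgewise comparison but a rerouting: one has to express the "clustered" transitions of $\mathcal S$ through paths of $\Delta$-moves and use Cauchy–Schwarz (a flow/path comparison), choosing the routes so that the double-factorial weights telescope against the ones appearing in $\langle-\Delta G,G\rangle_0$, leaving only the single loss $Cn$. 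Making this telescoping clean and uniform in $n$ — i.e. controlling the pile-up configurations — is the delicate point; the restriction to $i=j$ and to the even-parity sector $\mathcal H_{0,2n}$ is what keeps the underlying combinatorics (and the first step above) tractable.
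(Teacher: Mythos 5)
Your setup and your lower bound are essentially the paper's: the proof there writes $<-\mathcal SF,F>=\sum_\xi m(\xi)\sum_{|x-y|=1}\xi_x(1+\xi_y)\left[F(\xi^{x,-2,y,+2})-F(\xi)\right]^2$ and compares coefficients move by move, using $2^{-2n}\le m(\xi)\le 1$ and $1\le 1+\xi_y\le 1+2n$. The gap is your upper bound: you announce an obstruction, propose a path/flow rerouting to circumvent it, and never carry it out. Worse, under the normalization of $<\cdot,\cdot>_0$ you adopt (weight $1/\eta!$ per configuration, i.e.\ the second expression in (\ref{eq:48}), the one for which $\Delta$ is self-adjoint), no such argument can succeed because the inequality itself fails. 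Take $d=1$, $F=\mathbf 1_{\{\xi=2n\delta_0\}}$, $G=\mathbf 1_{\{\eta=n\delta_0\}}$: then $<-\mathcal SF,F>=4n\,m(2n)\sim \sqrt{n}$, while $<-\Delta G,G>_0=2n/n!$, so the ratio grows like $n!/\sqrt{n}$, far beyond $Cn$ or any polynomial in $n$. A choice of paths cannot repair a false inequality.

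What you have actually put your finger on is that the two expressions in (\ref{eq:48}) disagree on configurations with multiply occupied sites: the first weights each $\eta$ by $1$, the second by $1/\eta!$. The paper's proof of Lemma \ref{lem:1} implicitly uses the first convention. There the edgewise ratio of coefficients is $2(1+2\eta_y)\,m(2\eta)$ --- your extra factor $\eta!$, equal to $2^{n}\cdot 2^{-2n}\prod_z(2\eta_z-1)!!$, is exactly the discrepancy between the two weights --- and since $2^{-2n}\le m(2\eta)\le 1$ this ratio lies between $2^{-2n}$ and $2(1+2n)$, so both bounds close termwise and the pile-up configurations cause no trouble. (The price is that $\Delta$ is self-adjoint only for the $1/\eta!$ weight, so the paper's own conventions are not entirely consistent on the diagonal; but the lemma as proved there requires the weight-$1$ reading, and the test functions used later are spread out enough that the diagonal does not affect orders of magnitude.) To complete your proof you must fix the convention: with weight $1$ your opening computation already yields both bounds with no rerouting; with weight $1/\eta!$ the statement has to be renormalized before anything can be proved.
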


\begin{proof}
One has
\begin{equation*}
<-{\mathcal S} F, F> = \sum_{\xi \in {\mathcal E}_{0,n}} \sum_{|x-y|=1} \xi_x (1+\xi_y)\left[ F(\xi^{x,-2,y,+2}) -F(\xi) \right]^2 m(\xi)
\end{equation*}

For any integer $k$, we have

\begin{equation*}
m(2k)= \cfrac{(2k)!}{(2k!!)^2}=\cfrac{\prod_{j=1}^k (2j -1)}{\prod_{j=1}^k 2j}=\prod_{j=1}^k \left( 1-\cfrac{1}{2j}\right)
\end{equation*}
and
\begin{equation*}
m(2k+1)= \cfrac{(2k+1)!}{((2k+1)!!)^2}=\cfrac{\prod_{j=1}^k (2j)}{\prod_{j=1}^k (2j+1)}=\prod_{j=1}^k \left( 1-\cfrac{1}{2j+1}\right)
\end{equation*}
so that $2^{-k} \le m(2k) \le 1$ and $(2/3)^k \le m(2k+1) \le 1$. It follows that
\begin{equation*}
2^{-2n} \le \inf_{\xi \in {\mathcal E}_{0,2n}} m(\xi) \le \sup_{\xi \in {\mathcal E}_{0,2n}} m(\xi) \le 1
\end{equation*}
We use now the fact that $1 \le (1+\xi_y) \le (1+2n)$ and we conclude.
\end{proof}

It follows that the same lemma is also true for the inner product with translations since
\begin{equation}
\label{eq:trans}
\ll F,G \gg_{(0)} =\lim_{k \to \infty} \cfrac{1}{(2k+1)^d} \left< \left(\sum_{|x| \leq k} \tau_x F\right), \left(\sum_{|x| \leq k} \tau_x G \right) \right>_{(0)}
\end{equation}

\begin{lemma}
\label{lem:2}
Let $F(\xi)$ a local function belonging to ${\mathcal H}_{0,2n}$ and define $G(\eta)=F(2\eta)$. There exists a constant $C$ independent of $n$ and $F$ such that
\begin{equation*}
C^{-1} 2^{-2n} \ll -\Delta G, G \gg_0 \le \ll -{\mathcal S}F, F \gg \le C n \ll -\Delta G, G\gg_0
\end{equation*}
\end{lemma}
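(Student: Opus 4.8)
The plan is to deduce Lemma \ref{lem:2} from Lemma \ref{lem:1} by means of the box-averaging identity \eqref{eq:trans}, exploiting that the substitution $F \mapsto F(2\,\cdot\,)$ and the operators $\mathcal{S}$ and $\Delta$ all commute with the lattice shifts $\tau_x$. Concretely, fix $F \in \mathcal{H}_{0,2n}$, put $G(\eta) = F(2\eta)$, and for each integer $k \ge 1$ set
\begin{equation*}
\widetilde F_k = \sum_{|x| \le k} \tau_x F, \qquad \widetilde G_k = \sum_{|x| \le k} \tau_x G .
\end{equation*}
Since $\mathcal{H}_{0,2n}$ is a linear shift-invariant subspace, $\widetilde F_k$ again belongs to $\mathcal{H}_{0,2n}$ with the \emph{same} $n$; and since $\tau_x(2\eta) = 2\,\tau_x\eta$ one has $\widetilde F_k(2\eta) = \widetilde G_k(\eta)$, so the pair $(\widetilde F_k, \widetilde G_k)$ is of exactly the type to which Lemma \ref{lem:1} applies.

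First I would apply Lemma \ref{lem:1} to $\widetilde F_k$, which gives, for every $k$,
\begin{equation*}
C^{-1} 2^{-2n} < -\Delta \widetilde G_k, \widetilde G_k >_0 \ \le\ < -\mathcal{S}\widetilde F_k, \widetilde F_k >\ \le\ C n < -\Delta \widetilde G_k, \widetilde G_k >_0 ,
\end{equation*}
with $C$ the constant of Lemma \ref{lem:1}, which depends on neither $n$ nor $k$. Because $\mathcal{S}$ is symmetric for $<\cdot,\cdot>$ and commutes with the $\tau_x$, we have $< -\mathcal{S}\widetilde F_k, \widetilde F_k > = <\sum_{|x| \le k} \tau_x(-\mathcal{S}F), \sum_{|y| \le k} \tau_y F>$, and likewise $< -\Delta \widetilde G_k, \widetilde G_k >_0 = <\sum_{|x| \le k} \tau_x(-\Delta G), \sum_{|y| \le k} \tau_y G>_0$. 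Dividing the chain of inequalities by $(2k+1)^d$ and letting $k \to \infty$, the identity \eqref{eq:trans} — used once for $(-\mathcal{S}F, F)$ with the inner product $<\cdot,\cdot>$ and once for $(-\Delta G, G)$ with $<\cdot,\cdot>_0$ — turns the two outer terms into $C^{-1} 2^{-2n} \ll -\Delta G, G \gg_0$ and $C n \ll -\Delta G, G \gg_0$, and the middle term into $\ll -\mathcal{S}F, F \gg$. The non-strict inequalities pass to the limit, which is exactly the asserted bound.

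The points that require attention are all routine. One must check that the constant produced by Lemma \ref{lem:1} is uniform in the box size $k$: it is, because it enters solely through the estimates $2^{-2n} \le m(\xi) \le 1$ on $\mathcal{E}_{0,2n}$ and $1 \le 1 + \xi_y \le 1 + 2n$, neither of which involves $k$. One must also note that forming the box sum does not change the degree, which is why the same $n$ is used throughout and $\widetilde F_k \in \mathcal{H}_{0,2n}$; and that $\mathcal{S}$ and $\Delta$, being translation-invariant local operators, genuinely commute with $\tau_x$, so that the operator can be pulled outside the box sum before \eqref{eq:trans} is invoked. I do not expect a genuine obstacle: all the analytic content is already contained in Lemma \ref{lem:1}, and \eqref{eq:trans} merely transports its conclusion through a bounded shift-invariant limit.
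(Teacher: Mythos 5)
Your proposal is correct and is exactly the argument the paper intends: the paper gives no separate proof of Lemma \ref{lem:2}, only the remark that it ``follows'' from Lemma \ref{lem:1} together with the box-averaging identity (\ref{eq:trans}), and your write-up simply fills in that deduction (apply Lemma \ref{lem:1} to the shift sums, use translation invariance of $\mathcal{S}$ and $\Delta$ and uniformity of the constant in $k$, divide by $(2k+1)^d$, and pass to the limit).
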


The $H_{-1,\lambda}$ norms are defined by
\begin{equation*}
\| F \|_{-1,\lambda}^2 = \ll F, (\lambda -{\mathcal S} )^{-1} F \gg = \sup_{G} \left\{ 2 \ll F, G \gg - \ll G, (\lambda -{\mathcal S}) G \gg\right\}
\end{equation*}
and
\begin{equation*}
\|F\|_{-1,\lambda,0}^2 = \ll F, (\lambda -\Delta)^{-1} F \gg_0 = \sup_{G} \left\{ 2 \ll F,G \gg_0 - \ll G, (\lambda -\Delta)G \gg_0 \right\}
\end{equation*}
In these formulas, function $F$ is a local function from $\N^{\Z^d}$ into $\R$ and the supremum is carried over local functions.

It follows easily from lemma $2$ and equation (\ref{eq:48}) that there exists a positive constant $C(n)$ such that for every local square integrable function $F$ belonging to ${\mathcal H}_{0,2n}$
\begin{equation*}
\cfrac{1}{C(n)} \|mF\|_{-1,\lambda,0}^2 \le \|F \|_{-1,\lambda}^2 \le C(n) \|mF\|_{-1, \lambda,0}^2
\end{equation*}

In this inequality, the function $mF$ is defined by $(mF)(\xi)= m(\xi) F(\xi)$.

Every function belonging to to ${\mathcal H}_{0,2n}$ can be seen as a symmetric function ${\tilde F}$ from $(\Z^d)^n$ into $\R$:
\begin{equation*}
{\tilde F} (x_1,\ldots,x_n)= F(2\delta_{x_1} + \ldots +2\delta_{x_n})
\end{equation*}
where $\delta_x$ denotes the configuration with only one particle on site $x$. Sum of two configurations and multiplication by an integer are defined in the standard way. In the sequel, we will identify $F$ with ${\tilde F}$.

To obtain a lower bound, we use the following variational formula for the Laplace transform (see \cite{3}):
\begin{equation}
\label{eq:55}
\begin{array}{l}
\ll w_i, (\lambda -L)^{-1} w_i \gg\\
= \sup_f \left\{ 2 \ll f,w_i \gg - \ll f, (\lambda -S)f \gg - \ll Af, (\lambda -S)^{-1} Af \gg \right\}\\
=\sup_F \left\{ 2\ll F, \delta_{\xi_i}\gg -\| F \|_{1,\lambda}^2 -\| {\mathcal A}F \|_{-1,\lambda}^2 \right\}
\end{array}
\end{equation}
and we restrict the supremum over functions $f=\sum_\xi F(\xi) H_\xi$ such that $F$ belongs to ${\mathcal H}_{0,4}$. 

\begin{lemma}
Let $d\geq 1$ and $F,G : (\Z^d)^n \to \R$ be symmetric local functions. There exists a constant $C:=C(d,n)$ such that 
\begin{equation*}
\left| <G,{\mathcal A}^0 F >\right| \le C \left<G, (\lambda - {\mathcal S} G\right>^{1/2} \left< F, (\lambda -{\mathcal S})F \right>^{1/2}
\end{equation*}
The same inequality is valid for the inner product $\ll \cdot,\cdot \gg$:
\begin{equation*}
\left| \ll G,{\mathcal A}^0 F \gg \right| \le C \ll G, (\lambda - {\mathcal S}) G \gg^{1/2} \ll F, (\lambda -{\mathcal S})F \gg^{1/2}
\end{equation*}
\end{lemma}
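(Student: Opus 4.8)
The plan is to control the degree-conserving operator $\mathcal{A}^0 = \sum_i a_i \mathcal{A}_i^0$ by the symmetric part $\mathcal{S}$ in the spirit of the standard ``graded sector condition'' estimates. The starting point is the explicit formula
\[
(\mathcal{A}_i^0 F)(\xi) = \sum_x \left[ (\xi_{x+e_i}+1)\xi_x F(\xi^{x,-2,x+e_i,+2}) - (\xi_{x-e_i}-1)\xi_x F(\xi^{x,-2,x-e_i,+2}) \right].
\]
After translating to the free-particle picture via $G(\eta) = F(2\eta)$ (so that configurations $\xi \in \mathcal{E}_{0,2n}$ become $n$-particle configurations of labelled random walks, identified with symmetric functions on $(\Z^d)^n$), the operator $\mathcal{A}_i^0$ becomes, up to $\xi$-dependent combinatorial weights $\xi_x$, $(\xi_{x\pm e_i}+1)$ that are bounded between $1$ and $2n+1$ on $\mathcal{E}_{0,2n}$, essentially the antisymmetric operator that moves \emph{one} particle by one lattice step in direction $\pm e_i$. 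The key algebraic observation is that $\mathcal{A}_i^0$ is a ``gradient-type'' operator: writing $\langle G, \mathcal{A}_i^0 F\rangle$ as a sum over bonds, each summand involves the difference $F(\xi) - F(\xi^{x,-2,x+e_i,+2})$ paired against a bounded quantity times $G$ of a nearby configuration. This is exactly the form that Cauchy--Schwarz converts into a product of Dirichlet forms.

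Concretely, I would proceed as follows. First, rewrite $\langle G, \mathcal{A}_i^0 F\rangle$ (with respect to $m$) by summing by parts / relabelling so that it takes the shape $\sum_{\xi}\sum_{b} m(\xi)\, \omega_b(\xi)\, \bigl(F(\xi^b) - F(\xi)\bigr)\, \widetilde{G}_b(\xi)$, where $b$ ranges over oriented bonds, $\xi^b$ is the configuration with two particles moved across $b$, $\omega_b(\xi)$ is one of the combinatorial weights (bounded by $Cn$), and $\widetilde{G}_b(\xi)$ is $G$ evaluated at $\xi$ or at a bond-translate; here the ratios $m(\xi^b)/m(\xi)$ are bounded above and below by constants depending only on $n$ (this is exactly the content used in the proof of Lemma~1). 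Second, apply Cauchy--Schwarz in the $(\xi,b)$ sum to get
\[
|\langle G, \mathcal{A}_i^0 F\rangle| \le \Bigl( \sum_{\xi,b} m(\xi)\,\omega_b(\xi)\,(F(\xi^b)-F(\xi))^2 \Bigr)^{1/2} \Bigl( \sum_{\xi,b} m(\xi)\,\omega_b(\xi)\,\widetilde{G}_b(\xi)^2 \Bigr)^{1/2}.
\]
The first factor is, up to the bound $\omega_b \le Cn$ and the $m$-ratio bounds, exactly $\langle -\mathcal{S}F, F\rangle^{1/2}$ times a constant $C(d,n)$; the second factor, after using the $m$-ratio bounds again and the fact that translating $b$ only permutes the summation, is bounded by $C(d,n)\,\langle G, G\rangle^{1/2}$, which is $\le C(d,n)\,\langle G, (\lambda-\mathcal{S})G\rangle^{1/2}$ since $-\mathcal{S}\ge 0$ and $\lambda>0$. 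Combining, and absorbing the $a_i$ and the sum over $i$ into $C(d,n)$, yields the claimed inequality for $\langle\cdot,\cdot\rangle$. Third, for the inner product with translations $\ll\cdot,\cdot\gg$, I would use the representation \eqref{eq:trans} as a limit of normalized $\langle\cdot,\cdot\rangle$-inner products over $\sum_{|x|\le k}\tau_x(\cdot)$: applying the already-proved bound to $F_k = \sum_{|x|\le k}\tau_x F$ and $G_k = \sum_{|x|\le k}\tau_x G$ and using translation invariance of $\mathcal{S}$ and $\mathcal{A}^0$, then dividing by $(2k+1)^d$ and passing to the limit, gives the translation-invariant version verbatim.

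The main obstacle I anticipate is purely bookkeeping rather than conceptual: matching up, after the relabelling that turns $\langle G, \mathcal{A}_i^0 F\rangle$ into a sum of bond-differences of $F$, the precise weights $\omega_b(\xi)$ that appear against $F(\xi^b)-F(\xi)$ with the weights $\xi_x(1+\xi_y)$ that appear in the Dirichlet form $\langle -\mathcal{S}F, F\rangle = \sum_\xi\sum_{|x-y|=1}\xi_x(1+\xi_y)(F(\xi^{x,-2,y,+2})-F(\xi))^2 m(\xi)$, and checking that the ``leftover'' piece $\sum_{\xi,b}m(\xi)\omega_b(\xi)\widetilde{G}_b(\xi)^2$ really is comparable to $\langle G,G\rangle$ with a constant depending only on $d$ and $n$ (the dependence on $n$ entering through the crude bound $\omega_b \le 2n+1$ and through the bounds $2^{-2n}\le m(\xi)\le 1$ on $\mathcal{E}_{0,2n}$). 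One subtlety worth care: the term $(\xi_{x-e_i}-1)$ in $\mathcal{A}_i^0$ is not of the form $\xi_{y}$ or $\xi_y+1$, so one must either symmetrize the two sums in $\mathcal{A}_i^0$ against each other (exploiting its antisymmetry in $\LL^2(m)$) or verify directly that on $\mathcal{E}_{0,2n}$ this weight is still bounded between $0$ and $2n$, which suffices. Once these comparisons are in place the estimate is immediate from Cauchy--Schwarz, so I would keep the write-up short and lean on the computations already recorded for Lemma~1.
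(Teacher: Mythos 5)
Your set-up---rewriting $\langle G,\mathcal A_i^0F\rangle$ as a sum over bonds and applying Cauchy--Schwarz---is the right machinery, but the way you split the form leaves a genuine gap. After your Cauchy--Schwarz the second factor is (up to combinatorial constants) $\langle G,G\rangle^{1/2}$, and you then assert ``$\langle G,G\rangle^{1/2}\le C(d,n)\,\langle G,(\lambda-\mathcal S)G\rangle^{1/2}$ since $-\mathcal S\ge 0$ and $\lambda>0$.'' That inequality is false with a $\lambda$-independent constant: positivity only gives $\langle G,(\lambda-\mathcal S)G\rangle\ge\lambda\langle G,G\rangle$, i.e.\ $\langle G,G\rangle\le\lambda^{-1}\langle G,(\lambda-\mathcal S)G\rangle$, and $\mathcal S$ restricted to $\mathcal H_{0,2n}$ is (comparable to) a random-walk generator on the infinite lattice, so it has no spectral gap and no better bound holds. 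Since the lemma's constant must depend only on $(d,n)$ --- this $\lambda$-uniformity is precisely what allows Corollary~\ref{cor:1} to discard $\|\mathcal A^0F\|_{-1,\lambda}^2$ from the variational formula as $\lambda\to 0$ --- your argument only proves the weaker estimate $|\langle G,\mathcal A^0F\rangle|\le C\langle F,-\mathcal SF\rangle^{1/2}\langle G,G\rangle^{1/2}$, which does not imply the statement.

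The missing idea is to make the discrete gradient fall on $G$ as well as on $F$. The paper achieves this by exploiting the antisymmetry of $\mathcal A_i^0$ in $\LL^2(m)$: replacing $\langle G,\mathcal A_i^0F\rangle$ by $\frac12\bigl(\langle G,\mathcal A_i^0F\rangle-\langle\mathcal A_i^0G,F\rangle\bigr)$ and using the combinatorial identity $m(k+2)(k+2)\,m(\ell-2)(\ell-2)=(k+1)m(k)\,\ell\, m(\ell)$ to match the weights after the change of variables, one arrives at
\begin{equation*}
\langle\mathcal A_i^0F,G\rangle=\cfrac{1}{2}\sum_{\xi,x}\xi_x(\xi_{x+e_i}+1)\bigl[F(\xi^{x,-2,x+e_i,+2})-F(\xi)\bigr]\bigl[G(\xi)-G(\xi^{x,-2,x+e_i,+2})\bigr]m(\xi)\;-\;(\text{same with } x-e_i),
\end{equation*}
so that Cauchy--Schwarz produces the product of the \emph{two} Dirichlet forms $\langle F,-\mathcal SF\rangle^{1/2}\langle G,-\mathcal SG\rangle^{1/2}$, with no loss as $\lambda\to 0$. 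You do mention symmetrizing the two sums in $\mathcal A_i^0$ against each other, but only as a device for taming the weight $(\xi_{x-e_i}-1)$; it is in fact the essential step of the proof, and without it the argument does not close. Your passage from $\langle\cdot,\cdot\rangle$ to $\ll\cdot,\cdot\gg$ via the averaged-translate representation \eqref{eq:trans} is fine once the pointwise estimate is corrected.
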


\begin{proof}
By definition of $m$, for any nonnegative integers $k,\ell$, we have
\begin{equation}
\label{eq:58}
m(k+2) (k+2) m(\ell -2) (\ell -2) =(k+1) m(k) \ell m(\ell)
\end{equation}
and by simple changes of variables, one has:
\begin{equation*}
\begin{array}{lcl}
\left< {\mathcal A}_i^0 F,G\right>&=& \cfrac{1}{2} \sum_{\xi,x} \xi_x (\xi_{x+e_i} +1) \left[ F(\xi^{x-2,x+e_i, +2}) -F(\xi)\right]\left[ G(\xi)- G(\xi^{x,-2,x+e_i,+2})\right]m(\xi)\\
&-&\cfrac{1}{2} \sum_{\xi,x} \xi_x (\xi_{x-e_i} +1) \left[ F(\xi^{x-2,x-e_i, +2}) -F(\xi)\right]\left[ G(\xi)- G(\xi^{x,-2,x-e_i,+2})\right]m(\xi)
\end{array}
\end{equation*}
The estimate is then a simple consequence of Schwarz's inequality. The second part of the lemma follows from the definition of $\ll \cdot, \cdot \gg$.
\end{proof}

\begin{cor}
\label{cor:1}
Let $d \geq 1$ and $F:(\Z^d)^n \to \R$ be a symmetric local function. There exists a constant $C=C(d,n)>0$ such that
\begin{equation*}
\| {\mathcal A}^0F\|_{-1,\lambda}^2 \leq C \| F\|_{1,\lambda}^2
\end{equation*} 
\end{cor}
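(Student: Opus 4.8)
The statement to prove is Corollary \ref{cor:1}, which asserts $\|\mathcal A^0 F\|_{-1,\lambda}^2 \le C \|F\|_{1,\lambda}^2$ for symmetric local $F:(\Z^d)^n \to \R$. The plan is to unwind both sides into their variational characterizations and reduce the estimate to the preceding lemma. Recall that by definition $\|\mathcal A^0 F\|_{-1,\lambda}^2 = \ll \mathcal A^0 F, (\lambda - \mathcal S)^{-1} \mathcal A^0 F \gg = \sup_G \{ 2 \ll G, \mathcal A^0 F \gg - \ll G, (\lambda - \mathcal S) G \gg\}$, and that $\|F\|_{1,\lambda}^2 = \ll F, (\lambda - \mathcal S) F \gg$.

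First I would fix an arbitrary local test function $G$ appearing in the supremum for $\|\mathcal A^0 F\|_{-1,\lambda}^2$. Applying the second inequality of the preceding lemma directly to the pair $(G, F)$ gives
\begin{equation*}
2 \ll G, \mathcal A^0 F \gg \le 2C \, \ll G, (\lambda - \mathcal S) G \gg^{1/2} \ll F, (\lambda - \mathcal S) F \gg^{1/2}.
\end{equation*}
Now I would apply the elementary bound $2ab \le a^2 + b^2$ with $a = \ll G, (\lambda - \mathcal S) G \gg^{1/2}$ and $b = C \ll F, (\lambda - \mathcal S) F \gg^{1/2}$, yielding
\begin{equation*}
2 \ll G, \mathcal A^0 F \gg \le \ll G, (\lambda - \mathcal S) G \gg + C^2 \ll F, (\lambda - \mathcal S) F \gg.
\end{equation*}
Rearranging, $2 \ll G, \mathcal A^0 F \gg - \ll G, (\lambda - \mathcal S) G \gg \le C^2 \|F\|_{1,\lambda}^2$, and since this holds for every $G$, taking the supremum over $G$ on the left gives exactly $\|\mathcal A^0 F\|_{-1,\lambda}^2 \le C^2 \|F\|_{1,\lambda}^2$. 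Renaming $C^2$ as $C$ finishes the argument.

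I do not anticipate a genuine obstacle here: the corollary is a formal consequence of the bilinear estimate in the lemma via the standard duality/Cauchy–Schwarz trick that turns an off-diagonal $H_{-1}$–$H_1$ bound into a bound on the $H_{-1}$ norm. The only minor point to be careful about is that the lemma was stated with $\mathcal S$ in the role of the operator whose Dirichlet form controls both sides, so one must make sure the constant $C(d,n)$ from the lemma is the same (or at least a uniform upper bound) for the pair we plug in; since we apply the lemma with $G$ ranging over all local functions and $F$ fixed, and the lemma's constant depends only on $d$ and $n$, this is automatic. One should also note that $\mathcal A^0$ preserves the degree $n$ (stated just before the abusive-notation paragraph), so $\mathcal A^0 F$ indeed lies in the same space ${\mathcal H}_{0,2n}$ and the $\|\cdot\|_{-1,\lambda}$ norm on the left is the appropriate one.
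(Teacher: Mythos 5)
Your argument is correct and is exactly the intended derivation: the paper states the corollary without proof as an immediate consequence of the preceding lemma, and the standard route is precisely the one you take — plug the bilinear bound into the variational formula for $\|\cdot\|_{-1,\lambda}^2$ and absorb the $G$-term via $2ab\le a^2+b^2$. No gaps.
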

Recall we want to obtain a lower bound for the formula (\ref{eq:55}). Corrolary \ref{cor:1} implies we can forget the term $\| {\mathcal A}^0 F\|_{-1,\lambda}^2 $ in the variational formula (\ref{eq:55}).

\section{The one dimensional case}
\label{sec:5}

We recal that every function $F$ belonging to ${\mathcal H}_{0,2n}$ is identified with a symmetric function ${\tilde F}$ from $\Z^n$ into $\R$:
\begin{equation*}
{\tilde F}(x_1, \ldots, x_n) =F(2 \delta_{x_1} + \ldots+ 2\delta_{x_n})
\end{equation*}

In the sequel, we forget the $\tilde {\phantom F}$ and consider a function $F \in {\mathcal H}_{0,n}$ as a symmetric function on ${\Z}^n$.

To obtain a lower bound, we restrict the supremum over functions $f=\sum_\xi F(\xi) H_\xi$ such that $F$ belongs to ${\mathcal H}_{0,4}$. We identify hence $F$ with a symmetric function on $\Z^2$.

By corollary \ref{cor:1}, $\|{\mathcal A}^0 F\|_{-1,\lambda}^2$ is of no importance and can be forgotten. 

Since $F \in {\mathcal H}_{0,4}$, we have that ${\mathcal A}^+ F$ belongs to ${\mathcal H}_{0,6}$. ${\mathcal A}^+ F$ is identified with a symmetric function on $\Z^3$.
 
 For all symmetric function $F(x,y)$ (resp. $G(x,y,z)$) from $\Z^2$ (resp. $\Z^3$) to $\Z^2$ (resp. $\Z^3$), we define
 \begin{equation*}
 F^* (\alpha)= \sum_{z\in \Z} F(z,z+\alpha)
 \end{equation*}
resp.
\begin{equation*}
G^* (\alpha, \beta)= \sum_{z \in Z} G(\alpha +z, \beta+z,z)
\end{equation*}
One can check that
\begin{equation*}
\ll F,F \gg_0 =\sum_{\alpha \in \Z}( F^* (\alpha))^2, \quad \ll G,G \gg_0 =\sum_{\alpha, \beta \in \Z} (G^* (\alpha,\beta))^2 
\end{equation*}
and 
\begin{equation}
\label{eq:67}
\ll F, (-\Delta F) \gg_0 =\sum_{u \in \Z} \sum_e \left[ F^* (u+e) -F^* (u)\right]^2, \quad \ll G, (-\Delta G) \gg_0 =\sum_{u \in \Z^2} \sum_e \left[ G^* (u+e) -G^* (u)\right]^2
\end{equation}
where the sum over $e$ is carried over all $e \in \Z^2$ sucht that $e=\pm(0,1),\pm(1,0), \pm (1,1)$.

The advantage of this notation is to reduce the degree of functions. We take now a function $F \in {\mathcal H}_{0,4}$ (i.e. a symmetric function on $\Z^2$) and we compute $G^* = (m {\mathcal A}^+ F)^*$. One has

\begin{equation*}
\begin{array}{l}
(m{\mathcal A}^+ F) (x,x,x+1) = \cfrac{3}{2} (F(x,x+1)-F(x,x))\\
(m {\mathcal A}^+ F) (x,x,x-1)= \cfrac{3}{2} (F(x,x) - F(x-1,x))\\
(m {\mathcal A}^+ F)(x-1,x,x+1) = \cfrac{1}{2} (F(x,x+1)- F(x-1,x))\\
(m {\mathcal A}^+ F) (x,x+1,y) = \cfrac{1}{2} (F(x+1,y)- F(x,y)), \quad y\neq x-1,x,x+1,x+2\\
(m{\mathcal A}^+ F)(x,y,z)= 0 \quad \text{otherwise}
\end{array}
\end{equation*}

A long but elementary computation shows that:
\begin{equation}
\label{eq:G}
\begin{array}{lcl}
G^{*} (x,y) &=& \cfrac{1}{4} \left[\delta_1 (y-x) - \delta_1 (x-y) \right](F^* (y)-F^*(x))\\
&-&\cfrac{1}{4} \delta_1 (|x|) \delta_1 (y-x) (F^* (x+1)- F^* (x))\\
&-&\cfrac{1}{4} \delta_1 (|y|) \delta_1 (x-y) (F^* (y+1)- F^* (y))\\
&+&\cfrac{3}{2} ( \delta_1 (x)\delta_1 (y) -\delta_{-1}(x) \delta_{-1}(y))(F^* (0) -F^* (1))\\
&+&\cfrac{1}{4} \delta_{-1} (x) \delta_{\ge 2} (y) (F^* (y-1) -F^* (y))\\
&+&\cfrac{1}{4} \delta_{-1} (y) \delta_{ \ge 2} (x) (F^* (x-1) -F^* (x))\\
&+& \cfrac{1}{4} \delta_{1} (x) \delta_{\ge 3} (y) (F^* (y)-F^* (y+1))\\
&+&\cfrac{1}{4} \delta_1 (y) \delta_{\ge 3} (x) (F^{*}(x) -F^* (x+1))
\end{array}
\end{equation}

We choose now the following test function $F$:
\begin{equation}
\label{eq:70}
F^* (x)= \lambda^{-1/4} e^{-\lambda^{3/4} |x|}
\end{equation}
and evaluate for this test function the terms appearing in the variational formula (\ref{eq:55}).  In the sequel, $C$ denotes a positive constant independent of $\lambda$ which can change from line to line. 

By using the decomposition (\ref{eq:25}) of $w_i$, we have that for small $\lambda >0$:
\begin{equation*}
\ll \delta_{\xi_1}, F \gg =2a_1 \sum_{x} F(x,x+1)=2a_1 F^* (1) \sim \lambda^{-1/4}
\end{equation*}
Moreover, since $F$ belongs to ${\mathcal H}_{0,4}$, there exists a positive constant $C$ such that
\begin{equation*}
C^{-1} <F,F>_0 \le <F,F> \le C<F,F>_0 
\end{equation*}
and by (\ref{eq:trans}), the same is true for the inner-products $\ll \cdot, \cdot \gg_0$ / $\ll \cdot, \cdot \gg$. The norm of $F$ with respect to $\ll \cdot, \cdot \gg_0$ is easy to evaluate and is of order $\lambda^{-1/4}$. 
The third term to estimate is $\ll F, -{\mathcal S} F \gg$. By lemma \ref{lem:2}, this term is of the same order as $\ll F, -{\mathcal S} F \gg_0$. Thanks to formula (\ref{eq:67}) and explicit form of $F^*$ , we obtain that $\ll F, (- {\mathcal S}) F \gg$ is of order $\lambda^{1/4}$. Hence we proved
\begin{equation}
\label{eq:71}
2 \ll \delta_{\xi_1}, F \gg \sim \lambda^{-1/4}, \quad \lambda \ll F, F \gg \sim \lambda^{-1/4}, \quad \ll F, -{\mathcal S} F \gg \sim \lambda^{1/4}
\end{equation}

 We now evaluate the last term
 \begin{equation*}
 \ll [m({\mathcal A}^+ F)], (\lambda -\Delta)^{-1} [m({\mathcal A}^+ F)] \gg_0
 \end{equation*}
 
 \begin{lemma}
 \label{lem:4}
 Let $F^* (x)= \lambda^{-1/4} e^{-\lambda^{3/4} |x|}$. We have
 \begin{equation}
 \ll [m({\mathcal A}^+ F)], (\lambda -\Delta)^{-1} [m({\mathcal A}^+ F)] \gg_0 \sim \lambda^{-1/4}
 \end{equation}
 \end{lemma}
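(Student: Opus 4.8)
The plan is to pass to Fourier series on $\Z^2$ and evaluate the resulting integral by a decomposition of frequency space governed by the three scales $1$, $\lambda^{1/2}$, $\lambda^{3/4}$ built into the test function $F^*$.

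\emph{Step 1 (reduction to a Fourier integral).} By the variational characterisation of the $H_{-1,\lambda,0}$ norm together with the identities (\ref{eq:67}) expressing $\ll\cdot,\cdot\gg_0$ and $\ll\cdot,-\Delta\cdot\gg_0$ in terms of the reduced function $(\cdot)^*$, the quantity $\ll m({\mathcal A}^+F),(\lambda-\Delta)^{-1}m({\mathcal A}^+F)\gg_0$ depends on $F$ only through $G^*:=(m{\mathcal A}^+F)^*$ and equals, up to a fixed positive constant, the $H^{-1}$ norm of $G^*$ for the Laplacian on $\Z^2$ with edge set $\{\pm(1,0),\pm(0,1),\pm(1,1)\}$, i.e.
\[
\ll m({\mathcal A}^+F),(\lambda-\Delta)^{-1}m({\mathcal A}^+F)\gg_0\ \asymp\ \int_{[-\pi,\pi]^2}\frac{|\widehat{G^{*}}(k)|^2}{\lambda+\omega(k)}\,\frac{dk}{(2\pi)^2},\qquad \omega(k)=2\big[(1-\cos k_1)+(1-\cos k_2)+(1-\cos(k_1+k_2))\big],
\]
where $\omega(k)\asymp|k|^2$ near $0$ and is bounded below away from $0$. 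I will use repeatedly the transverse bound $\int_{-\pi}^{\pi}(\lambda+\omega(k))^{-1}\,d\nu\asymp(\lambda+\kappa^2)^{-1/2}$, where $\kappa:=k_1+k_2$ and $\nu:=k_1$; this holds because, for $\kappa$ fixed, $\omega$ has a nondegenerate minimum of size $\asymp\kappa^2$ in the variable $\nu$.

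\emph{Step 2 (computing $\widehat{G^*}$).} For $F^*(x)=\lambda^{-1/4}e^{-\lambda^{3/4}|x|}$ one has $\widehat{F^*}(\kappa)=\lambda^{-1/4}\dfrac{1-r^2}{1-2r\cos\kappa+r^2}$ with $r=e^{-\lambda^{3/4}}$, hence $|\widehat{F^*}(\kappa)|\asymp\lambda^{1/2}(\lambda^{3/2}+\kappa^2)^{-1}$ (compare $1-r\asymp\lambda^{3/4}$ and $1-\cos\kappa\asymp\kappa^2$). Transforming the eight terms of (\ref{eq:G}), the diagonal term contributes $\tfrac i2(\sin k_1+\sin k_2)\widehat{F^*}(\kappa)=i\sin\tfrac\kappa2\cos\tfrac{k_1-k_2}{2}\,\widehat{F^*}(\kappa)$ — a discrete gradient of $F^*$ in the direction $(1,1)$ — while the remaining seven terms are each supported on a coordinate line $\{|x|=1\}$ or $\{|y|=1\}$ or on a finite set, and their Fourier transforms are controlled by quantities of size $\lambda^{1/2}(\lambda^{3/2}+k_j^2)^{-1/2}$, resp. $O(\lambda^{1/2})$. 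The structural fact that makes the lemma sharp is $\widehat{G^*}(0)=0$: this is the conservation identity $\sum_\xi m(\xi)({\mathcal A}^+F)(\xi)=0$, which follows from ${\mathcal A}^-\mathbf 1=0$ (itself a consequence of $\sum_y\xi_y\xi_{y+e_i}=\sum_y\xi_y\xi_{y-e_i}$) and $({\mathcal A}^-)^{*}=-{\mathcal A}^+$ in $\LL^2(m)$. A one-term-further expansion then shows that, up to genuinely lower-order contributions, $\widehat{G^*}(k)$ behaves like $\kappa\,\widehat{F^*}(\kappa)$ times a bounded factor that is $\asymp 1$ for $|\kappa|\lesssim1$, so $|\widehat{G^*}(k)|\asymp |\kappa|\,|\widehat{F^*}(\kappa)|\asymp\lambda^{1/2}|\kappa|(\lambda^{3/2}+\kappa^2)^{-1}$; in particular $|\widehat{G^*}(k)|\asymp\lambda^{-1}|\kappa|$ for $|\kappa|\lesssim\lambda^{3/4}$ and $|\widehat{G^*}(k)|\asymp\lambda^{1/2}|\kappa|^{-1}$ for $\lambda^{3/4}\lesssim|\kappa|\le\pi$, exactly as for the diagonal term alone.

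\emph{Step 3 (the frequency integral) and the main obstacle.} Integrating out $\nu$ with the transverse bound reduces the estimate to
\[
\int_{-\pi}^{\pi}\frac{|\widehat{G^*}(k)|^2}{\sqrt{\lambda+\kappa^2}}\,d\kappa\ \asymp\ \int_{-\pi}^{\pi}\frac{\lambda\,\kappa^2}{(\lambda^{3/2}+\kappa^2)^2\sqrt{\lambda+\kappa^2}}\,d\kappa ,
\]
which I split over $|\kappa|\le\lambda^{3/4}$, $\lambda^{3/4}\le|\kappa|\le\lambda^{1/2}$ and $\lambda^{1/2}\le|\kappa|\le\pi$: the first two ranges each contribute a term of order $\lambda^{-1/4}$ and the third a term of order $1$, giving the claimed $\lambda^{-1/4}$. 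The delicate part is Step 2: the seven non-diagonal terms of (\ref{eq:G}) cannot be handled by a single triangle inequality if one wants the sharp two-sided estimate, because some of them have Fourier transforms that do not vanish at $k=0$; one must exhibit the cancellations producing $\widehat{G^*}(0)=0$ and, more importantly, showing that $\widehat{G^*}(k)$ stays of the same size and direction as its diagonal part near the origin — equivalently, that the pieces reinforce rather than cancel in the dominant window $|\kappa|\lesssim\lambda^{1/2}$. (For the \emph{upper} bound $\ll m({\mathcal A}^+F),(\lambda-\Delta)^{-1}m({\mathcal A}^+F)\gg_0\le C\lambda^{-1/4}$ — which is all that enters (\ref{eq:55}) in the proof of Theorem~\ref{th:1} — it suffices to run Step 3 separately on the diagonal term and on each of the seven others, all of which are $O(\lambda^{-1/4})$; the cancellation analysis is needed only for the matching lower bound, where alternatively one may insert into the dual variational formula a test function adapted to the diagonal profile.)
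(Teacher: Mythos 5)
Your proof follows essentially the same route as the paper's: pass to the Fourier transform of the reduced function $G^*=(m{\mathcal A}^+F)^*$, use the transverse integration bound $U_\lambda(s)\le C/\sqrt{\lambda+s^2}$, decompose $G^*$ into the eight pieces of (\ref{eq:G}) with the same size estimates ($|\widehat{F^*}(\kappa)|\asymp \lambda^{1/2}(\lambda^{3/2}+\kappa^2)^{-1}$, the line-supported terms of size $\lambda^{1/2}(\lambda^{3/2}+\kappa^2)^{-1/2}$, the finitely supported ones $O(\lambda^{1/2})$), and evaluate the same one-dimensional frequency integral $\lambda\int \kappa^2(\lambda^{3/2}+\kappa^2)^{-2}(\lambda+\kappa^2)^{-1/2}\,d\kappa\asymp\lambda^{-1/4}$. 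Your closing caveat is also accurate: the paper likewise only proves the term-by-term \emph{upper} bounds (which is all that is used in (\ref{eq:55})) and does not carry out the cancellation analysis that the two-sided ``$\sim$'' in the lemma statement would strictly require, so your treatment is, if anything, slightly more explicit on that point.
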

 
 \begin{proof}
 For any function $f:\Z^n \to \R$, we introduce the Fourier transform ${\hat f}$ of $f$ defined by 
 \begin{equation*}
 {\hat f}(\xi) =\sum_{x \in \Z^n} f(x) e^{2i\pi x \cdot \xi}, \quad \xi \in [0,1]^n
 \end{equation*}
 By (\ref{eq:67}), we have
 \begin{equation*}
 \ll [m({\mathcal A}^+ F)], (\lambda -\Delta)^{-1} [m({\mathcal A}^+ F)] \gg_0 = \int_{[0,1]^2} ds dt \cfrac{|{\hat G^*} (s,t)|^2}{\lambda + 4 \theta (s) + 4\theta (t) + 4 \theta (s+t)}
 \end{equation*}
 where $\theta (u)= \sin^2 (\pi u)$. We have
 \begin{equation*}
 \ll [m({\mathcal A}^+ F)], (\lambda -\Delta)^{-1} [m({\mathcal A}^+ F)] \gg_0 \leq \int_{[0,1]^2} ds dt \cfrac{|{\hat {G^*}} (s,t)|^2}{\lambda + 4 \theta (s) + 4\theta (t)}= \left< G^*, (\lambda -\Delta)^{-1} G^* \right>_0
 \end{equation*}
 We can express Fourier transform of $G^*$ in terms of ${\hat F}^*$. We write $G^* = G_1^* +G_2^*+\ldots+G_8^*$ with $G_1^*, \ldots, G_8^*$ the eight terms appearing in (\ref{eq:G}). We claim that
 \begin{equation*}
 \begin{cases}
 \ll G_j, (\lambda -\Delta)^{-1} G_j \gg_0 \sim \lambda^{-1/4}, \; j=1,5,6,7,8,\\
 \ll G_j, (\lambda-\Delta)^{-1} G_j \gg_0 =O(\lambda \log \lambda), \; j=2,3,4
 \end{cases}
 \end{equation*}

 We begin by the proof of the first claim. We have
 \begin{equation*}
 G_1^* (x,y)= \cfrac{1}{4} \left[ \delta_1 (y-x) - \delta_1 (x-y)\right] (F^* (y) -F^* (x))
 \end{equation*} 
and 
 \begin{equation*}
 {\hat G}_1^* (s,t) = -\cfrac{i}{2} \left\{ \sin (2\pi s) +\sin (2\pi t) \right\} {\hat F}^* (s+t)
 \end{equation*}
 The Fourier transform of ${\hat F}^*$ is easy to compute and is given by
 \begin{equation*}
 {\hat F}^* (s) = \cfrac{ \lambda^{-1/4} (1-2e^{-2\lambda^{3/4}})}  {(1-e^{-\lambda^{3/4}})^2 + 4e^{-\lambda^{3/4}} \sin^{2} (\pi s)}
 \end{equation*}
 and we have
 \begin{equation*}
 | {\hat F}^* (s) | \leq \cfrac{C {\sqrt \lambda}}{\lambda^{3/2} +\sin^2 (\pi s)}
 \end{equation*}
 as soon as $\lambda$ is sufficiently small. Moreover, a simple computation shows that
 \begin{equation}
 \label{eq:82}
 \begin{array}{lcl}
 \int_{[0,1]^2} ds dt \cfrac{|{\hat G}_1^* (s,t)|^2}{\lambda +4\theta (s) +4 \theta(t)}&=& \int_{[0,1]^2} ds dt \cfrac{(\sin (2\pi s) +\sin (2\pi t))^2 |{\hat F}^* (s+t)|^2}{\lambda + 4\theta (s) +4\theta (t)}\\
 &=& \int_0^1 ds |{\hat F}^* (s)|^2 \sin^2 (\pi s) U_\lambda (s) 
 \end{array}
 \end{equation}
  where
 \begin{equation*}
U_{\lambda} (s) =\int_0^1 \cfrac{\cos^2 (\pi u)}{\lambda + 2 \sin^2 \left( \cfrac{\pi (u+s)}{2}\right) +2 \sin^2 \left(\cfrac{\pi (u-s)}{2} \right)}du
 \end{equation*}
 We are interested in the behavior of (\ref{eq:82}) as $\lambda$ goes to zero. The critical points of the integrand are $0$ and $1$ and by symmetry arguments, we can restric ourselves to $0$. By equation (4.12) of \cite{3}, one has
 \begin{equation*}
 U_{\lambda} (s) \leq \cfrac{C}{\sqrt{\lambda +s^2}}
 \end{equation*} 
 
 It follows that
 \begin{equation*}
 4\int_0^{1/2} ds |{\hat F}^* (s)|^2 \sin^2 (\pi s) U_\lambda (s) \leq C\lambda  \int_0^{1/2} ds \cfrac{\sin^2 (\pi s)}{(\lambda^{3/2} +\sin^2 (\pi s))^2 \sqrt{\lambda +s^2}}
 \end{equation*}
 Standard analysis shows that this last term is of order $\lambda^{-1/4}$ which proves the first claim. 
 
 Let us examine the term $G_8$ (terms $G_5, G_6,G_7$ are evaluated in the same way). We have
 \begin{equation*}
 {\hat G}_8^* (x,y)= \cfrac{\lambda^{-1/4}(1-e^{-\lambda^{3/4}})}{4} \delta_1 (y) \delta_{x\geq 3} e^{-\lambda^{3/4} x}
 \end{equation*}
By a direct computation, we obtain
\begin{equation*} 
 |{\hat {G_8^*}} (s,t)|^2 \leq C \cfrac{\lambda}{\lambda^{3/2} + \sin^2 (\pi s)}
 \end{equation*}
 
 It follows that
 \begin{equation*}
 \begin{array}{lcl}
 \ll G_8 , (\lambda -\Delta)^{-1} G_8 \gg_0 &=& \int_{[0,1]^2} ds dt \cfrac{|{\hat G}_8^* (s,t)|^2}{\lambda +4 \theta(s) + 4\theta (t) +\theta (s+t)}\\
 &\leq& C \lambda \int_0^1 ds \cfrac{1}{\lambda^{3/2} +\sin^2 (\pi s)} \left(\int_0^1 \cfrac{dt}{\lambda +\theta(s) +\theta (t)}\right)\\
 &\leq& C \lambda \int_{0}^{1} \cfrac{ds}{(\lambda^{3/2} +s^2)(\sqrt{\lambda +s^2})}\\
 &=& O(\lambda^{-1/4}) 
 \end{array}
 \end{equation*}
because we have
\begin{equation*}
\int_{0}^1 \cfrac{dt}{\lambda +\theta(s) +\theta (t)} \leq \cfrac{C}{\sqrt{\lambda +\theta(s)}}
\end{equation*}
 
For the terms $G_2, G_3, G_4$, we use the fact that $|{\hat{G_j^*}} (s,t)| \leq C \lambda^{1/2}$ if $j=2,3,4$ and we observe that
\begin{equation*}
\int_{[0,1]^2} \cfrac{ds dt}{\lambda +\theta (s) +\theta(t)} = O (\log \lambda)
\end{equation*}
 
 The lemma is proved.
 \end{proof}
 
 We can now conclude the proof of theorem \ref{th:1} for the one dimensional case. We use the variational formula (\ref{eq:55}) and the test function
 \begin{equation*}
 f(p)=a\sum_{x,y} F(x,y) H_{2\delta_x +2\delta_y} (p)
 \end{equation*} 
Here $a$ is a positive constant we will fix later and the function $F$ is defined in (\ref{eq:70}). By (\ref{eq:71}), lemma \ref{lem:4} and equation (\ref{eq:55}), we obtain
 \begin{equation*}
 \ll w_1, (\lambda -L)^{-1} w_1 \gg \ge C_1 a \lambda^{-1/4} -C_2 a^2 \lambda^{1/4} -C_3 a^{2}\lambda^{-1/4}
 \end{equation*}
 where $C_1, C_2, C_3$ are positive constants independent of $a$ and $\lambda$. If $a$ is chosen sufficiently small, the lower bound is of order $\lambda^{-1/4}$ and theorem \ref{th:1} is proved.
 
\section{The $2$-dimensional case}
\label{sec:6}

For simplicity, we assume $a_2=0$ and $a_1=1$.

\begin{lemma}
\label{lem:5}
Fix $R>0$ and assume $d \ge 2$. There exists a positive constant $C=C(d)$ independent of $\lambda$ and $n$ such that for any local symmetric function $F: (\Z^d)^n \to \R$
\begin{equation*}
\sum_{i,j,k,l=1}^n \left< {\bf 1}_{|\bx_i - \bx_j |+ | \bx_k -\bx_l | \le R }F^2\right>_0 \le Cn^3 \left< F, (\lambda -\Delta) F \right>_0
\end{equation*}
\end{lemma}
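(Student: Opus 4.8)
The plan is to reduce the four-index bound to the well-understood two-particle case. The key observation is that the left-hand side counts, for each configuration of $n$ particles, the number of ordered quadruples $(i,j,k,l)$ for which the pair distances $|\bx_i-\bx_j|$ and $|\bx_k-\bx_l|$ are simultaneously at most $R$. Fixing two of the four indices, say $j$ and $l$, and summing over $i$ and $k$ can only produce at most $n^2$ terms of the form $\mathbf{1}_{|\bx_i-\bx_j|\le R}\mathbf{1}_{|\bx_k-\bx_l|\le R}$; so after relabelling it suffices to prove
\begin{equation*}
\sum_{i,k=1}^n \left< \mathbf{1}_{|\bx_i - \bx_k| \le R}\, F^2 \right>_0 \le C n \left< F, (\lambda - \Delta) F \right>_0,
\end{equation*}
because iterating (or rather, bounding the quadruple sum by the product of the two independent pair constraints) then costs only an extra factor $n^2$, giving the claimed $n^3$. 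Actually it is cleaner to do this in one stroke: bound $\mathbf{1}_{|\bx_i-\bx_j|+|\bx_k-\bx_l|\le R} \le \mathbf{1}_{|\bx_i-\bx_j|\le R}$, so the left-hand side is at most $n^2 \sum_{i,j}\left<\mathbf{1}_{|\bx_i-\bx_j|\le R}F^2\right>_0$, and reduce to the single two-body estimate.

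The heart of the matter is thus the two-particle Hardy-type inequality: for a symmetric $F$ on $(\Z^d)^n$ with $d\ge 2$,
\begin{equation*}
\sum_{i,j=1}^n \left< \mathbf{1}_{|\bx_i - \bx_j| \le R}\, F^2 \right>_0 \le C n \left< F, (\lambda - \Delta) F \right>_0.
\end{equation*}
First I would split off the diagonal $i=j$, which contributes $n \left< F^2\right>_0 \le n\lambda^{-1}\left<F,(\lambda-\Delta)F\right>_0$ — harmless for the purpose of a lower bound on the resolvent, but in fact one can do better; the relevant off-diagonal part is where the real work is. By symmetry of $F$ it is enough to control $n(n-1)\left<\mathbf{1}_{|\bx_1-\bx_2|\le R}F^2\right>_0$. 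Here I would use the standard trick of comparing the transition $\bx_1\to\bx_2$ of the $n$-particle random walk to a one-dimensional (or rather $d$-dimensional) random walk in the difference variable $\bx_1-\bx_2$ while freezing all other coordinates and the centre of mass: for fixed values of the remaining $n-2$ particles, $F$ becomes a function $G(\bx_1,\bx_2)$ and the relevant piece of the Dirichlet form $\left<F,(\lambda-\Delta)F\right>_0$ dominates the two-body Dirichlet form of $G$. One then invokes the classical estimate that in dimension $d\ge 2$ a finite ball near the origin has finite capacity relative to the Laplacian, i.e. $\sum_{|\bu|\le R} G(\bu,\cdot)^2 \le C\left<G,(\lambda-\Delta)G\right>$ uniformly in $\lambda$ — this is exactly where $d\ge 2$ is needed (in $d=1$ the factor would blow up like $\lambda^{-1/2}$). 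Summing over the frozen variables and absorbing the combinatorial factors gives the $n$ (one could absorb the pair-choice $n(n-1)$ but only one power survives after using symmetry carefully, or at worst one keeps $n^2$ here and adjusts the exponent in Lemma~\ref{lem:5} — the statement allows $n^3$, so being wasteful by one power is fine).

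The main obstacle is making the "freeze the other particles and reduce to a two-body problem" step rigorous: the Laplacian $\Delta$ couples all particles, so restricting to the moves of particles $i$ and $j$ only captures a fraction of the Dirichlet form, but since $\Delta$ is a sum of nonnegative terms over edges, one may simply keep the edges that move particle $i$ (or $j$) and discard the rest — this only decreases $\left<F,(\lambda-\Delta)F\right>_0$ in the estimate in the wrong direction, so one must be careful to discard edges on the side of the inequality where it helps. Concretely: $\left<F,(\lambda-\Delta)F\right>_0 \ge \lambda\left<F^2\right>_0 + (\text{edge terms moving }\bx_i)$, and one needs the two-body Hardy inequality applied with $\lambda$ kept, so that the bound is uniform in $\lambda\to 0$. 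The passage from the inner product $\left<\cdot,\cdot\right>_0$ on $(\Z^d)^n/\Sigma_n$ to the normalized one on $(\Z^d)^n$ (the $1/n!$ in \eqref{eq:48}) must be tracked to see that no spurious factorial appears, and the $d\ge 2$ recurrence-versus-transience dichotomy is the genuine analytic input — everything else is bookkeeping and a direct Fourier or Green's function estimate for the single two-dimensional walk.
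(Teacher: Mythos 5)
The paper does not actually prove this lemma---its ``proof'' is a one-line citation to Lemma 4.2 of Landim--Quastel--Salmhofer--Yau \cite{12}---so your sketch must be judged against their argument, and it contains a fatal step. You bound $\mathbf{1}_{|\bx_i-\bx_j|+|\bx_k-\bx_l|\le R}\le \mathbf{1}_{|\bx_i-\bx_j|\le R}$, pay a factor $n^2$ for the free indices $k,l$, and reduce everything to the single-pair estimate $\sum_{i,j}\langle \mathbf{1}_{|\bx_i-\bx_j|\le R}F^2\rangle_0\le Cn\langle F,(\lambda-\Delta)F\rangle_0$ with $C$ independent of $\lambda$. That estimate is \emph{false} in $d=2$. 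After freezing the other particles and passing to the difference variable $\bu=\bx_i-\bx_j\in\Z^d$, the optimal constant in $\sum_{|\bu|\le R}g(\bu)^2\le C\langle g,(\lambda-\Delta)g\rangle$ is governed by the Green function $(\lambda-\Delta)^{-1}(0,0)$ of the $d$-dimensional lattice Laplacian, which remains bounded as $\lambda\to0$ only for $d\ge 3$; in $d=2$ it diverges like $|\log\lambda|$ (the planar walk is recurrent), and in $d=1$ like $\lambda^{-1/2}$. You have placed the transience threshold at $d=2$ instead of $d=3$. This is exactly the point of the lemma: the reason the four-index statement holds uniformly in $\lambda$ for $d\ge 2$ is that the two pair constraints must be used \emph{simultaneously}. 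The pair of difference variables $(\bx_i-\bx_j,\bx_k-\bx_l)$ performs a random walk on $\Z^{2d}$, the set $\{|\bu|+|\bv|\le R\}$ is a bounded subset of $\Z^{2d}$, and $2d\ge 4$ is transient, so the relevant Green function is bounded uniformly in $\lambda$. Discarding one of the two constraints throws away precisely the structure that makes the lemma true in dimension two --- and is also why the statement is formulated with four indices rather than two in the first place.

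A secondary point: your treatment of the diagonal is not adequate. The bound $\langle F^2\rangle_0\le \lambda^{-1}\langle F,(\lambda-\Delta)F\rangle_0$ is not ``harmless'' when the lemma asserts a constant independent of $\lambda$, and ``one can do better'' is not substantiated --- for $i=j$ and $k=l$ the indicator is identically one, and no $\lambda$-uniform bound of $\langle F^2\rangle_0$ by the Dirichlet form exists. In fairness, this reveals an imprecision in the statement as transcribed here: taken literally, the terms with $i=j$ and $k=l$ contribute $n^2\langle F^2\rangle_0$, which cannot be dominated by $Cn^3\langle F,(\lambda-\Delta)F\rangle_0$ uniformly in $\lambda$, so the intended sum (as in \cite{12}) must exclude coincident index pairs. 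Your write-up should either exclude these terms explicitly or flag the issue, rather than dismiss it; and the genuine analytic input you need to supply is the capacity estimate for a bounded set in $\Z^{2d}$ relative to the joint walk of the two difference variables, not the two-body estimate in $\Z^d$.
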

 
 \begin{proof}
 This lemma is proved in lemma 4.2 of \cite{12}.
 \end{proof}
 
 \begin{lemma}
 \label{lem:6}
 Let $d \ge 2$ and $n \ge 1$. There exists a constant $C(n)$ such that for any local symmetric functions $F: (\Z^d)^{n \mp 1} \to \R$ and $G: (\Z^d)^n \to \R$
 \begin{equation*}
 \left| \left< G, {\mathcal A}^{\pm} F\right> - \left< G, {\mathcal A}^{\pm} F\right>_0\right| \le C(n) \left<F, (\lambda -\Delta)F \right>_0^{1/2} \left<G, (\lambda -\Delta)G \right>_0^{1/2}
 \end{equation*}
 and it follows that
 \begin{equation*}
 \left| \ll G, {\mathcal A}^{\pm} F\gg - \ll G, {\mathcal A}^{\pm} F\gg_0\right| \le C(n) \ll F, (\lambda -\Delta)F \gg^{1/2} \ll_0 G, (\lambda -\Delta)G \gg_0^{1/2}
 \end{equation*}
 \end{lemma}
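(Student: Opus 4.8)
The plan is to combine an explicit evaluation of ${\mathcal A}^{\pm}$ in the pair-particle picture with Lemma~\ref{lem:5}. Throughout, a function in ${\mathcal H}_{0,2m}$ is identified with a symmetric function $\tilde F$ on $(\Z^{d})^{m}$ via $\tilde F(x_{1},\dots,x_{m})=F(2\delta_{x_{1}}+\dots+2\delta_{x_{m}})$. I begin with two reductions. First, the inequality for $\ll\cdot,\cdot\gg$ follows from the one for $<\cdot,\cdot>$ via the averaging identity~(\ref{eq:trans}), exactly as Lemma~\ref{lem:2} is deduced from Lemma~\ref{lem:1}; so I only prove the $<\cdot,\cdot>$ bound. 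Second, it suffices to treat ${\mathcal A}^{+}$; the case ${\mathcal A}^{-}$ is analogous (and is in fact, up to sign and collision corrections, the transpose of the ${\mathcal A}^{+}$ computation, since ${\mathcal A}^{-}$ is the negative adjoint of ${\mathcal A}^{+}$ in $\LL^{2}(m)$ and an approximate transpose relation with the same kind of correction terms holds for $<\cdot,\cdot>_{0}$).

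The core of the argument is a structural description of the remainder. Fix a coordinate direction $e$. Writing out $({\mathcal A}^{+}F)(\eta)$ for $\eta=2\delta_{z_{1}}+\dots+2\delta_{z_{n}}$, the operator removes a pair from an occupied site $x$ with combinatorial weight $\eta_{x}(\eta_{x+e}-\eta_{x-e})$. Since this weight contains the discrete gradient $\eta_{x+e}-\eta_{x-e}$, which vanishes unless a further pair-particle sits at $x+e$ or $x-e$, the weighted expression $m(2\eta)\,(\widetilde{{\mathcal A}^{+}F})(\eta)$ agrees with the free-particle expression entering $<G,{\mathcal A}^{+}F>_{0}$ except when $\eta$ possesses two pair-particles within a fixed distance $R=R(d)$. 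Carrying out the change of variables in the sum over $x$ and the neighbouring occupied site, and using the identity~(\ref{eq:58}) to match the $m$-ratios against the free ones, one finds that on this near-coincidence set the discrepancy $K(\eta):=m(2\eta)\,(\widetilde{{\mathcal A}^{+}F})(\eta)-(\widetilde{{\mathcal A}^{+}_{0}F})(\eta)$, where ${\mathcal A}^{+}_{0}$ is the free-particle operator entering $<G,{\mathcal A}^{+}F>_{0}$, is a linear combination --- with coefficients controlled by a function of $n$ alone, using $2^{-n}\le m(2\eta)\le1$ --- of finitely many nearest-neighbour increments of $\tilde F$ at configurations obtained from $\eta$ by deleting one point, each such increment arising from only $O_{n}(1)$ configurations $\eta$. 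In dimension one these are the formulas for $m{\mathcal A}^{+}F$ displayed in Section~\ref{sec:5}, and the higher-dimensional computation has the same shape. Hence $<G,{\mathcal A}^{+}F>-<G,{\mathcal A}^{+}F>_{0}=\sum_{\eta}\tilde G(\eta)K(\eta)$ is a sum effectively supported on the set ${\mathcal N}$ of configurations with two pair-particles within distance $R$.

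Granting this, the bound is short. By Cauchy--Schwarz, $|<G,{\mathcal A}^{+}F>-<G,{\mathcal A}^{+}F>_{0}|$ is at most $\bigl(\sum_{\eta\in{\mathcal N}}\tilde G(\eta)^{2}\bigr)^{1/2}\bigl(\sum_{\eta}K(\eta)^{2}\bigr)^{1/2}$. The set ${\mathcal N}$ is contained in $\{\eta:\ \sum_{i\neq j}{\bf 1}(|z_{i}-z_{j}|\le R)\ge1\}$, so letting two of the four summation indices in Lemma~\ref{lem:5} coincide gives $\sum_{\eta\in{\mathcal N}}\tilde G(\eta)^{2}\le C(n)\,<G,(\lambda-\Delta)G>_{0}$; this is the only place the hypothesis $d\ge2$ is used. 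For the other factor, a bounded combination of squared nearest-neighbour increments of $\tilde F$, each occurring with multiplicity $O_{n}(1)$, is dominated by the Dirichlet form, whence $\sum_{\eta}K(\eta)^{2}\le C(n)\,<F,-\Delta F>_{0}\le C(n)\,<F,(\lambda-\Delta)F>_{0}$. Multiplying the two estimates gives the first assertion of the lemma, and the $\ll\cdot,\cdot\gg$ assertion then follows as explained above.

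The hard part is the bookkeeping in the middle step: one must carry out the change of variables carefully enough to see that the bulk part of $m{\mathcal A}^{+}F$ cancels \emph{exactly} against the free-particle operator, leaving a remainder that is simultaneously localized near coincidences and still of discrete-gradient type in $\tilde F$. Both properties are essential --- the first to apply Lemma~\ref{lem:5}, the second to avoid a spurious factor $\lambda^{-1/2}$ --- and verifying them is the only genuinely laborious ingredient; the rest is the one-line Cauchy--Schwarz argument above.
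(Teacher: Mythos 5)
Your overall architecture --- an explicit pair-particle formula for ${\mathcal A}^{\pm}$, localization of the discrepancy near coincidences, Cauchy--Schwarz, Lemma~\ref{lem:5} for the $G$-factor and the Dirichlet form for the $F$-factor --- is the same as the paper's, and your two reductions (from $\ll\cdot,\cdot\gg$ to $\left<\cdot,\cdot\right>$ via (\ref{eq:trans}), and from ${\mathcal A}^-$ to ${\mathcal A}^+$) are unobjectionable. But there is a genuine gap where you apply Lemma~\ref{lem:5}: you keep only a \emph{single} close-pair constraint on the $G^{2}$ sum. You bound $\sum_{\eta\in{\mathcal N}}\tilde G(\eta)^{2}$ with ${\mathcal N}\subset\{\eta:\sum_{i\ne j}{\bf 1}(|z_i-z_j|\le R)\ge 1\}$ and invoke Lemma~\ref{lem:5} ``with two of the four summation indices coinciding'', i.e.\ with $\{k,l\}=\{i,j\}$. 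Those degenerate terms constrain only two particles to be close, and the corresponding estimate
\begin{equation*}
\sum_{i\ne j}\left<{\bf 1}(|\bx_i-\bx_j|\le R)\,F^{2}\right>_0\ \le\ C(n)\left<F,(\lambda-\Delta)F\right>_0
\end{equation*}
is \emph{false} for $d=2$: in the relative coordinate of the two constrained particles it reduces to $\sum_{|u|\le R}|f(u)|^{2}\le C\left<f,(\lambda-\Delta)f\right>$ on $\Z^{2}$, which fails by a factor $\log(1/\lambda)$ because the two-dimensional Green function $(\lambda-\Delta)^{-1}(0,0)$ diverges logarithmically (test with $f=(\lambda-\Delta)^{-1}\delta_0$). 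This is also why Lemma~\ref{lem:5}, read with all four indices unrestricted, cannot be taken at face value: the version proved in \cite{12} and actually needed requires the two pairs $\{i,j\}$ and $\{k,l\}$ to be distinct, so that at least three particles are constrained and the relevant relative walk is effectively of dimension $\ge 4$, hence transient. Since $d=2$ is precisely the case this lemma is used for in Section~\ref{sec:6}, the gap is not cosmetic.

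The missing observation --- the one line the paper's proof turns on --- is that the discrepancy automatically carries \emph{two distinct} close pairs. Writing $({\mathcal A}^{+}_{i}F)(\bx)=\sum_{k,\ell}{\bf 1}_{\{\bx_k=\bx_\ell+e_i\}}\bigl(F(\bx^{k})-F(\bx^{\ell})\bigr)$, with $\bx^{k}$ the vector $\bx$ with its $k$-th coordinate removed, the difference $\left<G,{\mathcal A}^{+}_{i}F\right>-\left<G,{\mathcal A}^{+}_{i}F\right>_0$ is the same sum weighted by $m(\bx)-1$, which vanishes unless $\bx_{i'}=\bx_{j'}$ for some $i'\ne j'$. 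This coincident pair can never coincide with the adjacent pair $\{k,\ell\}$, because $\bx_k=\bx_\ell+e_i\ne\bx_\ell$. Inserting $\sum_{i'\ne j'}{\bf 1}_{\bx_{i'}=\bx_{j'}}$ and using $|m(\bx)-1|\le C(n)$, the Cauchy--Schwarz step then puts the genuine double indicator ${\bf 1}_{|\bx_k-\bx_\ell|\le 1}{\bf 1}_{|\bx_{i'}-\bx_{j'}|\le 1}$, with $\{i',j'\}\ne\{k,\ell\}$, on the $G^{2}$ factor, and the nondegenerate form of Lemma~\ref{lem:5} applies. Your treatment of the other factor --- finitely many squared nearest-neighbour increments of $\tilde F$ with $O_n(1)$ multiplicities, dominated by $\left<F,(\lambda-\Delta)F\right>_0$ --- is correct and matches the paper; only the $G$-factor needs to be repaired as above.
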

 
 \begin{proof}
 One has
 \begin{equation*}
 ({\mathcal A}_i^+ F) (\bx) = \sum_{k,\ell =1}^n {\bf 1}_{\{ \bx_k =\bx_\ell +e_i \}} \left( F (\bx^k) -F (\bx^\ell)\right)
 \end{equation*}
 with $\bx^k$ the vector $\bx$ where the coordinate $\bx^k$ has been removed. Observe it is the operator defined in formula (3.4) of \cite{12}. We have
 \begin{equation}
 \label{eq:94}
 \left< G, {\mathcal A}^{+}_i F\right> - \left< G, {\mathcal A}_i^{+} F\right>_0
 = \sum_{ k, \ell =1}^n {\bf 1}_{\{ \bx_k =\bx_\ell + e_i\}} G(\bx) (m(\bx) -1) (F(\bx^k) - F(\bx^\ell))
 \end{equation} 
 If $\bx \in (\Z^d)^n$ is such that $\bx_i \neq \bx_j$ for every $i \neq j$ then $m (\bx) -1 =0$. Hence we can introduce the sum of the following indicator function
 \begin{equation*}
 \sum_{i \neq j} {\bf 1}_{\bx_i = \bx_j}
 \end{equation*}
 in the sum (\ref{eq:94}). We recall that $| m(\bx)-1| \le C(n)$ for a constant $C(n)$. It follows that
 \begin{eqnarray*}
 \left|<G, {\mathcal A}^+_i F> - <G, {\mathcal A}_i^+ F>_0 \right|\\
 \le C(n) \sum_{i,j,k,\ell} \sum_\bx \left| G(\bx) {\bf 1}_{|\bx_k - \bx_\ell | \le 1}{\bf 1}_{| \bx_i - \bx_j | \le 1} \right| |F(\bx^k) - F(\bx^\ell) |
 \end{eqnarray*}
 One concludes by Schwarz inequality.
 \end{proof}
 
 \begin{cor}
 \label{cor:2}
 Let $d\ge 2$ and $F: (\Z^d)^n \to \R$. There exists a constant $C(n)$ such that
 \begin{equation*}
 \| {\mathcal A}^{\pm} F\|_{-1,\lambda}^2 \le C(n) \left( \| {\mathcal A}^{\pm} F \|_{-1,\lambda, 0}^2 + \| F \|_{1,\lambda,0}^2 \right)
 \end{equation*}
 \end{cor}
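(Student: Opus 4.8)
The plan is to deduce the corollary from the variational characterization of the $\|\cdot\|_{-1,\lambda}$ norm, combined with Lemma~\ref{lem:6} and the Dirichlet-form comparison of Lemma~\ref{lem:2}. Since ${\mathcal S}$ preserves the degree and the parity class, and ${\mathcal A}^{\pm}F$ lies in a single such sector (namely ${\mathcal H}_{0,2(n\pm1)}$, i.e.\ it is a symmetric function on $(\Z^d)^{n\pm1}$), in the variational formula
\begin{equation*}
\| {\mathcal A}^{\pm}F\|_{-1,\lambda}^2 = \sup_G \left\{ 2 \ll {\mathcal A}^{\pm}F, G \gg - \ll G, (\lambda-{\mathcal S})G \gg \right\}
\end{equation*}
only the component of $G$ in that same sector contributes; hence I may restrict the supremum to $G \in {\mathcal H}_{0,2(n\pm1)}$, seen as symmetric functions on $(\Z^d)^{n\pm1}$, and apply Lemma~\ref{lem:6} with its index equal to $n\pm1$.

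For a fixed such $G$, the first step is to bound $\ll {\mathcal A}^{\pm}F, G \gg = \ll G, {\mathcal A}^{\pm}F\gg$ by means of Lemma~\ref{lem:6}:
\begin{equation*}
\ll G, {\mathcal A}^{\pm}F\gg \le \ll G, {\mathcal A}^{\pm}F\gg_0 + C(n)\, \| F \|_{1,\lambda,0}\, \| G \|_{1,\lambda,0},
\end{equation*}
and then to estimate the first term on the right by Schwarz's inequality in the inner product $\ll\cdot,(\lambda-\Delta)\cdot\gg_0$, which gives $\ll G,{\mathcal A}^{\pm}F\gg_0 \le \|{\mathcal A}^{\pm}F\|_{-1,\lambda,0}\,\|G\|_{1,\lambda,0}$. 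The second step is to replace the free-particle form by the ${\mathcal S}$-form: Lemma~\ref{lem:2} applied in the inner product with translations, together with the elementary bounds $2^{-2n}\le m(\xi)\le 1$ on ${\mathcal E}_{0,2n}$, yields $\| G \|_{1,\lambda,0}^2 \le C(n)\,\| G \|_{1,\lambda}^2$ with $\|G\|_{1,\lambda}^2 = \ll G,(\lambda-{\mathcal S})G\gg$. Combining these,
\begin{equation*}
2 \ll {\mathcal A}^{\pm}F, G \gg - \ll G, (\lambda-{\mathcal S})G \gg \le 2 C(n)^{1/2}\Big( \| {\mathcal A}^{\pm}F\|_{-1,\lambda,0} + C(n)\| F\|_{1,\lambda,0}\Big)\,\|G\|_{1,\lambda} - \|G\|_{1,\lambda}^2 ,
\end{equation*}
and maximizing the right-hand side over the scalar $\|G\|_{1,\lambda}\ge 0$ (using $\max_{t\ge0}\{2bt-t^2\}=b^2$) and then using $(a+b)^2\le 2a^2+2b^2$ gives
\begin{equation*}
\| {\mathcal A}^{\pm}F\|_{-1,\lambda}^2 \le C(n)\Big( \| {\mathcal A}^{\pm}F\|_{-1,\lambda,0} + C(n)\| F\|_{1,\lambda,0}\Big)^2 \le C'(n)\Big( \| {\mathcal A}^{\pm}F\|_{-1,\lambda,0}^2 + \| F\|_{1,\lambda,0}^2\Big),
\end{equation*}
which is the claimed inequality.

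The computation is essentially bookkeeping. The points that need a little care are the reduction to test functions $G$ in the right degree/parity sector, so that Lemmas~\ref{lem:2} and~\ref{lem:6} genuinely apply, and the verification that the $n$-dependent constants appearing in those two lemmas (the powers $2^{\pm 2n}$ and the linear-in-$n$ factors) are harmless once absorbed into $C(n)$. The hypothesis $d\ge 2$ enters the argument only through Lemma~\ref{lem:6}; no further analytic input is needed.
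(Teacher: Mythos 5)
Your proof is correct and follows essentially the same route as the paper's: the variational characterization of $\|\cdot\|_{-1,\lambda}$, Lemma~\ref{lem:6} to pass from $\ll G,{\mathcal A}^{\pm}F\gg$ to $\ll G,{\mathcal A}^{\pm}F\gg_0$ at the cost of a $\|F\|_{1,\lambda,0}\|G\|_{1,\lambda,0}$ error, the Dirichlet-form comparison of Lemmas~\ref{lem:1}--\ref{lem:2} together with the bounds on $m$ to replace $\|G\|_{1,\lambda,0}$ by $\|G\|_{1,\lambda}$, and optimization of the resulting quadratic. The paper only sketches these steps; your write-up supplies the details (including the legitimate restriction of the supremum to the correct degree/parity sector) in the intended way.
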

 
\begin{proof}
We have only to prove the lemma with the inner product $\ll \cdot, \cdot \gg$ replaced with the standard inner product without translations $<\cdot,\cdot>$ thanks to (\ref{eq:trans}). Recall that
\begin{eqnarray*}
\left<  {\mathcal A}^{+} F, (\lambda -{\mathcal S})^{-1} {\mathcal A}^+ F\right>\\
= \sup_{G} \left\{ 2<\mathcal A^+ F, G> -\lambda <G,G>- <G, (- {\mathcal S} )G> \right\}
\end{eqnarray*}
Then we use lemma \ref{lem:6}, lemma \ref{lem:1} and equation (\ref{eq:48}) to conclude.
\end{proof} 
 
 We can now complete the proof. Recall that we want obtain a lower bound for the RHS of (\ref{eq:55}) and remark  that in the dual basis, one has
 \begin{equation*}
 \ll w_1, f \gg = 2a_1 \sum_{z \in \Z^d} F(2\delta_z + 2\delta_{z+e_1})
 \end{equation*}
 where $F$ is the function corresponding to $f$ in the dual basis. we restrict the supremum over functions belonging to ${\mathcal H}_{0,4}$. Hence, by corrolary {\ref{cor:1}}, corrolary {\ref{cor:2}} and lemma \ref{lem:2}, one has:
 \begin{eqnarray*}
 \ll w_1, (\lambda -L)^{-1} w_1 \gg \\
 \ge \sup_{F \in {\mathcal H}_{0,4}} \left\{ 4a_1 \sum_z F(2z, 2z + 2e_1) -C \| F\|_{1,\lambda,0}^2 - C \| {\mathcal A}^+ F\|_{-1, \lambda,0}^2\right\}
 \end{eqnarray*}
 The maximizer for this new variational problem can be computed using Fourier transform (cf. lemma 3.3 of \cite{12}). It turns out that
 \begin{equation*}
 \ll w_1, (\lambda -L)^{-1} w_1 \gg \ge 
 \begin{cases}
 C|\log \lambda|^{1/2} \quad d=2\\
 C \quad d \ge 3
 \end{cases}
 \end{equation*}
 
\section{The $d$-dimensional case for $d \geq 3$}
\label{sec:7}
In this section, we show that the diffusion coefficient is finite and strictly positive if the dimension $d$ is greater than $3$. For the asymmetric simple exclusion process, this has been proved in \cite{13}. The lower bound follows from the previous section. 
The upper bound is obtained by ignoring the asymmetric part of the generator:
\begin{eqnarray}
\label{eq:102}
\ll w_1 , (\lambda - L)^{-1} w_1 \gg\\
=\sup_{f} \left\{2 \ll f, w_1 \gg - \ll f, (\lambda -S)f \gg - \ll Af, (\lambda -S)^{-1} Af \gg \right\} \nonumber \\
\le \sup_f  \left\{2 \ll f, w_1 \gg - \ll f, (\lambda -S)f \gg \right\} \nonumber
\end{eqnarray}
We write this last variational formula in the dual basis $H_\xi$ and we recall that
\begin{equation*}
\ll w_1, f \gg = 2a_1 \sum_{z \in \Z^d} F(2\delta_z + 2\delta_{z+e_1})
\end{equation*}
with $f(p)= \sum_\xi F(\xi)H_{\xi} (p)$. Let us decompose $F$ in the orthogonel sum composed of subspaces ${\mathcal H}_{n,\varepsilon}$ ($\varepsilon$ are elements of $\{0,1\}^{\Z^d}$):
\begin{equation*}
F=\sum_{n,\varepsilon} F_{n,\varepsilon}, \quad F_{n,\varepsilon} \in {\mathcal H}_{n,\varepsilon}
\end{equation*} 
 Since ${\mathcal S}$ send an element of ${\mathcal H}_{n,\varepsilon}$ on an element of ${\mathcal H}_{n,\varepsilon}$, we have
 \begin{equation*}
 \| F \|_{1,\lambda}^2 = \sum_{n,\varepsilon} \| F_{n,\varepsilon}\|_{1,\lambda}^2
 \end{equation*}
 It follows that in the third line of (\ref{eq:102}), one can restrict the supremum over functions $F$ belonging to ${\mathcal H}_{2,0}$. If $F$ belongs to ${\mathcal H}_{2,0}$ then $F$ is identified with a symmetric function on $(\Z^d)^2$ and we have
 \begin{equation*}
 \|F\|_{1,\lambda}^2 \ge C \| F \|_{1,\lambda,0}^2
 \end{equation*}
 for a positive constant $C$. The supremum is then easily computed using Fourier transform. In the supremum appears the Green function of the discrete Laplacian which is finite only for dimension $d \geq 3$. We have
 \begin{equation*}
 \ll w_1, (\lambda -L)^{-1} w_1 \gg \le C, \quad d \ge 3
 \end{equation*}
 
 \section{Conclusions and final remarks}
 \label{sec:8}
 
For $d = 2$, the diffusion coefficient is expected to be of order $(\log t)^{2/3}$. It
has been proved by Yau in \cite{19} in a very technical paper for ASEP. It is not
clear that the method of \cite{19} can be applied for AEM. Indeed, constants $C(n)$ appearing in lemma \ref{lem:1} are exponential in $n$ and the method of \cite{19} seems to be restricted to polynomial dependence in $n$. Remark that ASEP is the only model belonging to KPZ class for which such behavior is proved for $d=2$. For the case $d \ge 3$, it would be interesting using generalized duality techniques to establish a fluctuation-dissipation equation (\cite{13}) for AEM, meaning a decomposition of the current in the form
\begin{equation}
w_i = \nabla \varphi_i + Lh_i
\end{equation}
for functions $\varphi_i$ and $h_i$ in a suitable Hilbert space.

In this paper we obtained lower (and upper for $d \ge 3$) bounds for the
diffusion coefficient for AEM. The strategy was based on "generalized duality
techniques" similar to \cite{3} and \cite{12}. A recent paper of Bal\'azs and Sepp\"al\"ainen
(cf. \cite{1}) improves considerably the lower bounds obtained in \cite{12} in the one dimensional case for the diffusion coefficient. For nearest neighbors ASEP (but not for general ASEP), the authors of \cite{1} are able to prove upper and lower bounds with the right order. Nevertheless, their method has restrictions : a key ingredient is attractivity of the process (AEM is not) and the method
has only been developed for the one dimensional case. Of course, generalized duality techniques have also restrictions. The key ingredients are:
\begin{itemize}
\item
If $(\mu_\rho)_{\rho}$, is a shift invariant€ family of stationary measures indexed by the
conserved quantity $\rho$ (e.g. the density), then $\LL^{2} (\mu_\rho)$ can be decomposed
in an orthogonal sum $\oplus_{n \ge 0} {\mathcal H}_n$.
\item
The symmetric part of the generator $S$ maps ${\mathcal H}_n$ into ${\mathcal H}_n$ ($S$ conserves
the degree). It means that the symmetric part of the generator restricted to ${\mathcal H}_n$ is the generator of a reversible Markov process with a finite number of particles.
\item
In general, the asymmetric part $A$ of the generator does not conserve the degree but the action of $A$ on each subspace ${\mathcal H}_n$ is a bounded operator from $({\mathcal H}_{n}, \|\cdot\|_1)$ into $\left (\oplus_{j=n-n_0}^{n+n_0} {\mathcal H}_{j}, \| \cdot \|_{-1}\right)$ where $n_0$ is a fixed positive integer.
\end{itemize} 
 
A model introduced by Sepp\"al\"ainen in \cite{17} belongs to the KPZ universality
class and should have anomalous behavior in low dimension as ASEP and AEM. Even if one consider this attractive 
process, methods of \cite{1} are difficult to apply (there is not concept of second
class particle for this process). This process is also difficult to study
with duality techniques because the symmetric part of the generator does
not conserve the degree. But a suitable modification of the process can
be studied with duality techniques. The slight modification is the discrete
counterpart of the asymmetric energy model. The symmetric part is given
by the KMP process (\cite{9}) and the asymmetric part by the asymmetric part
of the process defined in \cite{17}. The state space is $\R^{\Z^d}$ . For a real valued
local function $f(\xi)$ defined on the state space of the process, the action of the generator $L$ on $f$ is given by
\begin{eqnarray}
\label{eq:108}
(Lf)(\xi)&=& \sum_{i=1}^d \sum_{x \in \Z^d} \int_0^1 dp \left[ f(E_{x,x+e_i,p} \xi) -f(\xi)\right] +a_i \xi_x \left[ f(T_{x,x+e_i,p} \xi) - f(T_{x,x-e_i,p} \xi)\right] 
\end{eqnarray}
where the exchange operator E and the transfer operator $T$ are defined by
\begin{equation*}
\begin{cases}
E_{x,y,p} \xi = \xi +[p(\xi_x +\xi_y) - \xi_x] \delta_x +[(1-p)(\xi_x +\xi_y) -\xi_y]\delta_y\\
T_{x,y,p} \xi = \xi +p \xi_x (\delta_y -\delta_x)
\end{cases}
\end{equation*}
This Markov process is well defined, conserves the energy $\sum_x \xi_x$ and centered exponential product measures of the form
\begin{equation*}
\mu_\lambda (d\xi) = \prod_x \lambda e^{-\lambda \xi_x} d\xi_x
\end{equation*}
are invariant. The symmetric part of the generator is the KMP process. At the
difference of the symmetric part of the process defined in \cite{17}, it conserves
the degree and generalized duality technique presented here (the basis is composed of multivariate Laguerre polynomials) can be applied to the process defined by (\ref{eq:108}) and one obtains similar lower bounds for the corresponding diffusion coefficient.
 
\section{Appendix}
\label{sec:appendix}

We derive here the expression of the generator $L$ in the dual basis $H_{\xi}$. Recall that the Hermite polynomials $(h_n)_n$ satisfy the following equations:
\begin{equation}
\label{eq:hermite}
\begin{array}{l}
\cfrac{d^2 h_n}{du^2}- u \cfrac{d h_n}{du} +nh_n =0\\
\\
\cfrac{d h_n}{du} = n h_{n-1}\\
\\
h_{n+1} (u) = u h_n (u) -n h_{n-1} (u)\\
\\
h_0 =1 
\end{array}
\end{equation}
where we adopt the convention that $h_n =0$ if $n < 0$. We denote by ${\tilde H}_{\xi}$ the multivariate Hermite polynomial without the normalization factor $n(\xi)$:
\begin{equation*}
{\tilde H}_\xi = n(\xi) H_\xi
\end{equation*}

Fix $x \neq y$ in $\Z^d$. By (\ref{eq:hermite}), we have 
\begin{equation*}
(p_y \partial_{p_x} -p_x \partial_{p_y}) {\tilde H}_{\xi} = \xi_x {\tilde H}_{\xi^{x,-1,y,+1}} - \xi_y {\tilde H}_{\xi^{y,-1,x,+1}}
\end{equation*}
It follows that
\begin{eqnarray*}
(p_y \partial_{p_x} -p_x \partial_{p_y})^2 {\tilde H}_{\xi}&=& \xi_x (\xi_x -1) {\tilde H}_{\xi^{x,-2,y,+2}} -\xi_x (\xi_y +1) {\tilde H}_{\xi}\\
&+& \xi_y (\xi_y -1) {\tilde H}_{\xi^{x,+2,y,-2}} -\xi_y (\xi_x +1) {\tilde H}_{\xi}
\end{eqnarray*}
and we get finally
\begin{eqnarray*}
(p_y \partial_{p_x} -p_x \partial_{p_y})^2 {H}_{\xi}&=& (\xi_x -1) (\xi_y +2) { H}_{\xi^{x,-2,y,+2}} -\xi_x (\xi_y +1) {H}_{\xi}\\
&+& (\xi_x +2) (\xi_y -1) {H}_{\xi^{x,+2,y,-2}} -\xi_y (\xi_x +1) {H}_{\xi}
\end{eqnarray*}
because we have $n(k-2)/n(k)=1/k$. To obtain the expression for $({\mathcal S}_i F) (\xi)$, we write $f = \sum_{\xi} F(\xi) H_{\xi}$ and we have
\begin{eqnarray*}
S_i f &=& \sum_{\xi, x} F(\xi) \left\{(\xi_x -1) (\xi_{x+e_i} +2) {\tilde H}_{\xi^{x,-2,{x+e_i},+2}} -\xi_x (\xi_{x+e_i} +1) {\tilde H}_{\xi}\ \right\}\\
&+& \sum_{\xi, x} F(\xi) \left\{  (\xi_x +2) (\xi_{x+e_i} -1) {\tilde H}_{\xi^{x,+2,{x+e_i},-2}} -\xi_{x+e_i} (\xi_x +1) {\tilde H}_{\xi}\right\}
\end{eqnarray*}
Recall here that our conventions are such that $H_{\xi} =0$ if there exists $z \in \Z^d$ such that $\xi_z <0$. By a suitable obvious change of variables, we get
\begin{eqnarray*}
S_i f &=& \sum_\xi \left\{ \sum_x (\xi_x +1) \xi_{x-e_i} \left[ F(\xi^{x,+2,x-e_i,-2})- F(\xi)\right]\right\} H_\xi\\
&+& \sum_{\xi} \left\{ \sum_x (\xi_x +1) \xi_{x-e_i} \left[ F(\xi^{x,+2,x+e_i,-2})- F(\xi)\right] \right\} H_\xi
\end{eqnarray*}

The computations are similar for the asymmetric part. We have
\begin{equation*}
p_x p_y (p_y \partial_{p_x} -p_x \partial_{p_y}) {\tilde H}_{\xi}=p_x p_y (\xi_x {\tilde H}_{\xi^{x,-1,y,+1}}- \xi_y {\tilde H}_{\xi^{y,-1,x,+1}})
\end{equation*} 
By symmetry we can restrict ourselves to compute 
\begin{equation*}
p_x p_y (\xi_x {\tilde H}_{\xi^{x,-1,y,+1}})
\end{equation*}
which is equal to
\begin{eqnarray*}
\xi_x {\tilde H}_{\xi^{y,+2}} + \xi_x (\xi_y +1) (\xi_x -1) {\tilde H}_{\xi^{x,-2}}+ \xi_x (\xi_x -1) {\tilde H}_{\xi^{x,-2,y,+2}} +\xi_x (\xi_y +1) {\tilde H}_{\xi}
\end{eqnarray*}
and we obtain that
 \begin{equation*}
p_x p_y (p_y \partial_{p_x} -p_x \partial_{p_y}) {\tilde H}_{\xi}
\end{equation*}
is equal to
\begin{equation*}
\begin{array}{l}
\xi_x {\tilde H}_{\xi^{y,+2}} + \xi_x (\xi_y +1) (\xi_x -1) {\tilde H}_{\xi^{x,-2}}\\
-\xi_y {\tilde H}_{\xi^{x,+2}} - \xi_y (\xi_x +1) (\xi_y -1) {\tilde H}_{\xi^{y,-2}}\\
+(\xi_x -\xi_y) {\tilde H}_\xi\\
+\xi_x (\xi_x -1) {\tilde H}_{\xi^{x,-2,y,+2}} -\xi_y (\xi_y -1) {\tilde H}_{\xi^{y,-2,x,+2}}
\end{array}
\end{equation*}
Since ${\tilde H}_\xi =n(\xi) H_{\xi}$, we obtain
\begin{equation*}
\begin{array}{l}
p_x p_y (p_y \partial_{p_x} -p_x \partial_{p_y}) {\tilde H}_{\xi}\\
=\xi_x (\xi_y +2) H_{\xi^{y,+2}} -\xi_y (\xi_x +2) H_{\xi^{x,+2}}\\
+(\xi_y +1)(\xi_x -1) H_{\xi^{x,-2}} - (\xi_y -1)(\xi_x +1) H_{\xi^{y,-2}}\\
+(\xi_y -\xi_x) H_{\xi}\\
+(\xi_x -1) (\xi_y +2) H_{\xi^{x,-2,y,+2}} - (\xi_y -1) (\xi_x +2) H_{\xi^{x,+2,y,-2}}
\end{array}
\end{equation*}
If $f$ is a local smooth function such that $f(p)= \sum_{\xi} F(\xi)H_{\xi}$ then
\begin{eqnarray*}
A_i f = \sum_{\xi, x} F(\xi) \left\{ \xi_x (\xi_{x+e_i} +2) H_{\xi^{x+e_i,+2}} -\xi_{x+e_i} (\xi_x +2) H_{\xi^{x,+2}}\right\}\\
+\sum_{\xi,x} F(\xi) \left\{(\xi_{x+e_i}+1)(\xi_x -1) H_{\xi^{x,-2}} - (\xi_{x+e_i} -1)(\xi_x +1) H_{\xi^{{x+e_i},-2}} \right\}\\
+\sum_{\xi,x} F(\xi) \left\{ (\xi_x -1) (\xi_{x+e_i} +2) H_{\xi^{x,-2,{x+e_i},+2}} - (\xi_{x+e_i} -1) (\xi_x +2) H_{\xi^{x,+2,{x+e_i},-2}}\right\}
\end{eqnarray*}
By suitable changes of variables, we get the announced expression for ${\mathcal A}_i F$.

\end{document}